\newtheorem{theorem}{Theorem}
\newtheorem{definition}{Definition}
\newtheorem{lemma}{Lemma}
\newtheorem{corollary}{Corollary}
\let\oldacl\addcontentsline
\renewcommand{\addcontentsline}[3]{}
\newcommand{\Tr}{\mathrm{Tr}}
\DeclarePairedDelimiter\autobracket{(}{)}
\newcommand{\br}[1]{\autobracket*{#1}}
\DeclarePairedDelimiter\autoabsolute{|}{|}
\newcommand{\abs}[1]{\autoabsolute*{#1}}
\begin{document}
\title{
Bra-ket entanglement, an indicator bridging entanglement, magic, and coherence}
\author{Zhong-Xia Shang}
\email{shangzx@hku.hk}
\affiliation{HK Institute of Quantum Science $\&$ Technology, The University of Hong Kong, Hong Kong}
\affiliation{QICI Quantum Information and Computation Initiative, School of Computing and Data Science,
The University of Hong Kong, Pokfulam Road, Hong Kong}

\author{Si-Yuan Chen}
\affiliation{Hefei National Research Center for Physical Sciences at the Microscale and School of Physical Sciences, University of Science and Technology of China, Hefei 230026, China}
\affiliation{Shanghai Research Center for Quantum Science and CAS Center for Excellence in Quantum Information and Quantum Physics, University of Science and Technology of China, Shanghai 201315, China}
\affiliation{Hefei National Laboratory, University of Science and Technology of China, Hefei 230088, China}

\author{Wenjun Yu}
\affiliation{QICI Quantum Information and Computation Initiative, School of Computing and Data Science,
The University of Hong Kong, Pokfulam Road, Hong Kong}

\author{Giulio Chiribella}
\email{giulio@cs.hku.hk}
\affiliation{QICI Quantum Information and Computation Initiative, School of Computing and Data Science,
The University of Hong Kong, Pokfulam Road, Hong Kong}

\author{Qi Zhao}
\email{zhaoqcs@hku.hk}
\affiliation{QICI Quantum Information and Computation Initiative, School of Computing and Data Science,
The University of Hong Kong, Pokfulam Road, Hong Kong}

\footnotetext{ZS and SC contributed equally to this work.}

\begin{abstract}
Understanding the intricate interplay between distinct quantum resources is a fundamental prerequisite for rigorously characterizing the boundary between classical and quantum technologies. Among the vast landscape of quantum resources, entanglement, magic, and coherence have arguably attracted the most intense investigation. However, while universally recognized as the core drivers of quantum advantage, our understanding of their structural interplay remains fragmented and compartmentalized. In this work, we introduce an indicator called {\em bra-ket entanglement} (BKE) defined in the operator vectorization space to bridge all three quantum resources. Specifically, we show that BKE governs a resource dependence transition in the generation of entanglement: in the low-BKE regime, the growth of entanglement is dominated by coherence, largely independent of magic. However, as BKE increases, the dependence on coherence will gradually be replaced by a dependence on magic. Consequently, in the high-BKE regime, entanglement generation becomes dominated by magic, largely independent of coherence. These results are built on a series of new entropy-theoretic relations and are verified through numerical experiments. We also discuss implications of our results for the resource transitions in classical simulations of mixed states and marginal probabilities and for relating different classical simulation methods.



\end{abstract}
\maketitle
\noindent\textit{\textbf{Introduction.—}}The ongoing quest to achieve and definitively prove quantum advantages~\cite{preskill2018quantum,preskill2025beyond} relies fundamentally on our ability to understand and harness the underlying physical resources that remain inherently inaccessible to classical systems~\cite{chitambar2019quantum}. These unique quantum-mechanical resources serve a dual purpose in the landscape of modern quantum computing and information science \cite{nielsen2010quantum}. On one hand, they act as the essential fuel that powers emerging quantum technologies~\cite{bennett1993teleporting,ahnefeld2022role,zhao2025entanglement,kuroiwa2024every,zhang2024unconditional,yoganathan2019quantum,thomas2025role}. On the other hand, these very same resources establish the fundamental limits for classically simulating quantum systems~\cite{orus2019tensor,gottesman1998heisenberg,aaronson2004improved,vidal2003efficient,burgholzer2021hybrid}. Identifying and quantifying these resources is crucial for determining which quantum systems are "hard" to simulate and, consequently, where true quantum advantage lies.

Among the zoo of quantum features, three resources stand out as the pillars of quantum advantages: entanglement~\cite{horodecki2009quantum}, magic~\cite{veitch2014resource}, and coherence~\cite{streltsov2017colloquium}. {\em Entanglement}, an intrinsic quantum correlation that plays a key role in quantum information applications such as quantum teleportation~\cite{bennett1993teleporting} and quantum communication~\cite{ekert1991quantum}. In quantum computation, it also prevents efficient classical simulation by tensor network approaches~\cite{orus2019tensor,cirac2021matrix,perez2006matrix} and has recently been shown to accelerate quantum simulations \cite{zhao2025entanglement}. {\em Magic}, helps go beyond the stabilizer state, is a resource that enables universal quantum computation~\cite{bravyi2005universal,nielsen2010quantum} combined with Clifford circuits. Magic can be characterized as the level of superpositions of Pauli operators~\cite{leone2022stabilizer,dowling2024magic}, which set the bottleneck for efficient classical simulations via stabilizer-based simulation methods~\cite{gottesman1998heisenberg,aaronson2004improved} and Pauli-truncation based methods~\cite{schuster2024polynomial,aharonov2023polynomial,angrisani2025simulating,angrisani2024classically}. {\em Coherence}, capturing the ability of a quantum system to exist in a coherent superposition of reference basis, is critical for quantum speedup due to the quantum interference effect in quantum algorithms~\cite{stahlke2014quantum} and enables Heisenberg-limit quantum metrology protocols~\cite{giovannetti2011advances}. Coherence also sets the complexity lower bound for classical simulation methods based on the Feynman path~\cite{burgholzer2021hybrid,dawson2004quantum}.

Given the critical roles of these three resources, understanding their connections is a natural and necessary pursuit. Historically, however, they have been treated as distinct properties with different domains of application. For example, a stabilizer state \cite{hein2006entanglement} can possess maximal entanglement yet be efficiently simulatable due to zero magic \cite{gottesman1998heisenberg,aaronson2004improved}, while a product state with no entanglement can have both high magic \cite{haug2023quantifying} and high coherence \cite{streltsov2017colloquium}. The precise mechanism governing their interplay—and specifically, how one resource might be related to or constrain another—has remained unclear.

In this paper, we bridges these three resources by introducing an indicator which we call the {\em bra-ket entanglement} (BKE). Defined for general operators $O$ within the vectorized operator formalism, BKE is a conserved quantity under unitary evolution and can diagnose how the entanglement growth of the quantum evolution $UOU^\dag$ is governed by the interplay between the coherence and magic. Here, $O$ can represent either a quantum state or a quantum observable, and $U$ represents the quantum evolution in the Schr\"odinger or in the Heisenberg picture, respectively. 

Through BKE, we identify a resource dependence transition: when BKE of $O$ is low, entanglement growth is limited primarily by coherence (associated with Hadamard gates) and is independent of magic, explaining why stabilizer states can have high entanglement without magic. However, as BKE of $O$ increases, the dominant resource responsible for entanglement growth shifts from coherence to magic (associated with T gates), providing a deep underlying mechanism for the correlations between operator entanglement and Heisenberg magic observed in a recent study~\cite{dowling2025bridging}. Therefore, we highlight the importance of this quantity of BKE in understanding and relating different quantum resources.

\noindent\textit{\textbf{Spatial  vs bra-ket entanglement.—}}Throughout this work, we will adopt the vectorization picture to present our result, whose benifits will soon be clear.  Let $O$ be an operator acting on   the Hilbert space of  $n$ qubits, hereafter denoted by ${\cal H}_n  :  =  \mathbb C^{2^n}$,  and  let  $|O\rangle    :=  \sum_{i,j}  \, \langle i| O |j\rangle/\sqrt{\Tr[O^\dag O]} \,  |i\rangle  |j\rangle$ be the corresponding  normalized  vector  in the tensor product  Hilbert space ${\cal H}_n \otimes {\cal H}_n$.  Here, $O$ can be interpreted to represent either the density matrix or an observable.

\begin{figure}[htpb]
\centering
\includegraphics[width=0.7\linewidth]{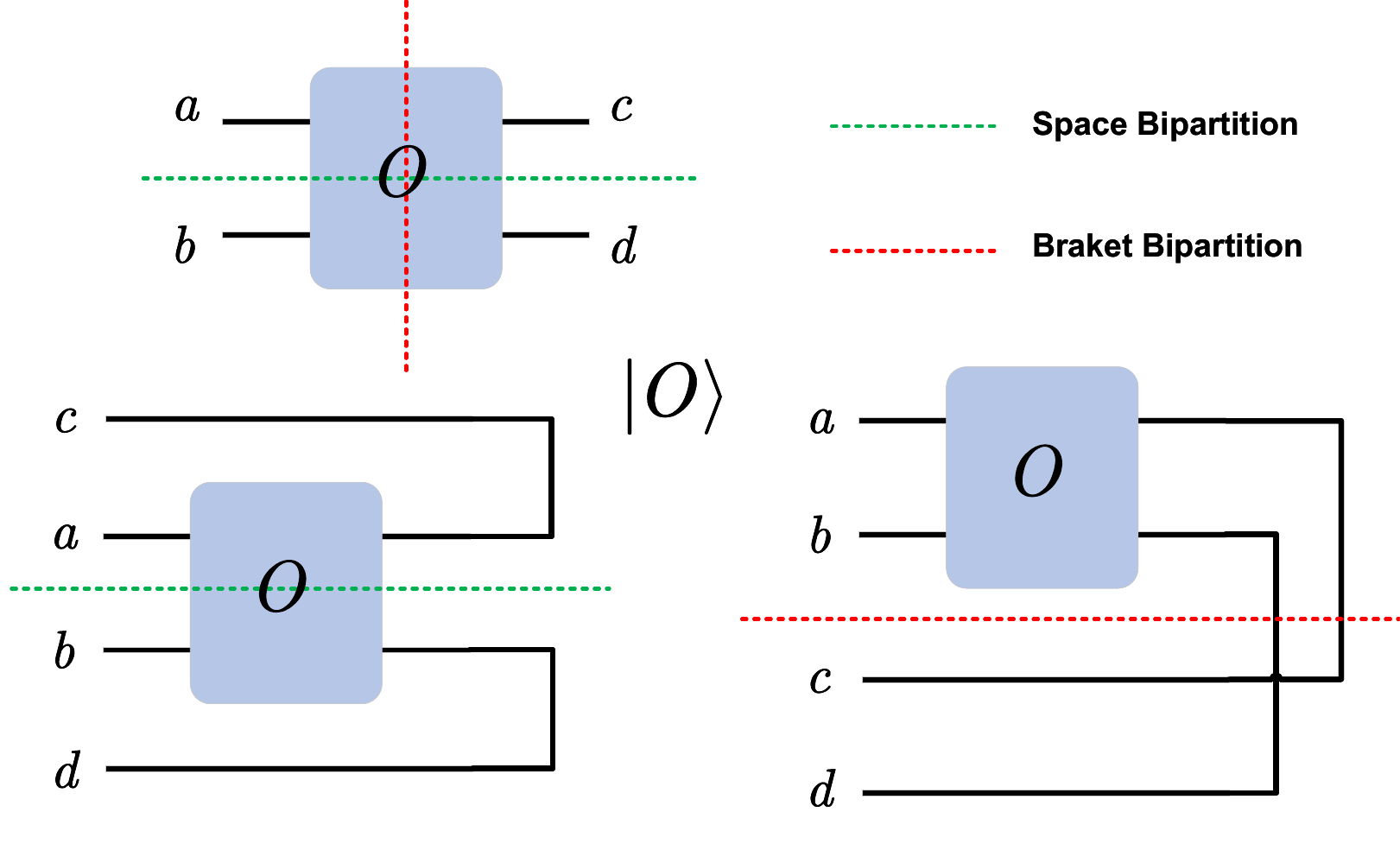}
\caption{\small{Tensor network representation of vectorization of $O$ and two types of bipartitions: space bipartition and bra-ket bipartition. Taking 2-qubit $O$ as an example. When $O=|00\rangle\langle 00|$, $|O\rangle$ is a product state under both space and bra-ket bipartitions. When $O$ is a Bell state density matrix, $|O\rangle$ is a maximally entangled state under the space bipartition but a product state under the bra-ket bipartition. When $O$ is proportional to a 2-qubit Pauli operator, $|O\rangle$ is a product state in the space bipartition but a maximally entangled state in the bra-ket bipartition.}
\label{fig1}}
\end{figure}  

When the $n$ qubits are divided into two groups which we call the space bipartition, the operator $O$ can be viewed as a tensor with four indices $a,b,c,d$, with indices $a$ and $c$ ($b$ and $d$) associated to the upper (lower) subsystem, as illustrated in Fig. \ref{fig1}. With respect to this bipartition, $|O\rangle$ can be written in the Schmidt decomposition $|O\rangle=\sum_i s_i |O_{u,i}\rangle\otimes |O_{l,i}\rangle$, where  \{$|O_{u,i}\rangle\}_i$  ($\{|O_{l,i}\rangle\}_i$)  are orthogonal vectors acting on the upper (lower) subsystems with subscripts ${\rm u}({\rm l})$. Since  $|O\rangle$ is normalized, we can regard it as a pure state of $2n$ qubits in ${\cal H}_n \otimes {\cal H}_n$, whose entanglement can be quantified as  
\begin{align}
H_{SE,\alpha}\br{O}:=    H_\alpha  \left( \, \{  s_i^2 \} \right) \, ,
\end{align}
where $H_\alpha  (\{p_i\})  :=  \log \left( \sum_i  p_i^\alpha \right)/(1-\alpha)$ is the $\alpha$-order  Rényi entropy~\cite{bromiley2004shannon}. 
 We call $ H_{SE,\alpha}\br{O}$ the {\em space entanglement} (SE) of the operator $O$.  When $O$ is the density matrix of a pure quantum state, its SE is exactly twice the entanglement entropy of the state \cite{horodecki2009quantum}. When the operator $O$ is the density matrix of a mixed state, the spatial entanglement is generally not a measure of mixed state entanglement (for example, it can be nonzero even for some separable states)~\cite{bennett1996mixed,horodecki1998mixed}, but it is important because it determines the bond dimension, which in turn determines the complexity of classical simulations in the tensor network framework \cite{orus2019tensor}.  In the literature,  $H_{SE,\alpha}\br{O}$  is sometimes known as the  {\em operator entanglement} of operator $O$ \cite{prosen2007operator}.

We now introduce another type of entanglement, defined by taking the bipartition between the Hilbert spaces associated with bras and kets in the operator $O$, which we call the bra-ket bipartition ($(ab|cd)$ in Fig. \ref{fig1}).  The indices $a$ and $b$  ($c$ and $d$) correspond to the rows (columns) of the operator $O$. Taking the Schmidt decomposition of the vector $|O\rangle$ with respect to this bipartition, we then obtain a different notion of entanglement, which we call the {\em bra-ket entanglement} (BKE) of operator $O$ and is quantified as  
 \begin{align}
    H_{BKE,\alpha}\br{O}:=   H_\alpha \left( \{  \gamma_i^{2}  \} \right)\,,
    \end{align}
where  $\{\gamma_i\}$ are the Schmidt coefficients in the decomposition  $|O\rangle=\sum_i \gamma_i |O_{r,i}\rangle\otimes |O_{c,i}\rangle$, where  \{$|O_{r,i}\rangle\}_i$  ($\{|O_{c,i}\rangle\}_i$)  are orthogonal vectors acting on the Hilbert spaces associated to the row (column) indices (See SM \ref{BKE}).  For example, when $O$ is a rank-one operator like a pure state density matrix, $|O\rangle$ is a product state with zero BKE. In contrast, when $O$ is an $n$-qubit Pauli operator,  $|O\rangle$ is proportional to a Bell state and has maximal BKE.  In later sections, we will see how BKE relates the SE with two other quantum resources, namely magic and coherence. An important and nice feature of BKE is that it is a conserved quantity under unitary transformations (See SM \ref{BKE}).

\noindent\textit{\textbf{Magic and coherence in the vectorization picture.—}}Magic quantifies the deviation from stabilizer states, and the cost of simulating non-stabilizer states in the stabilizer framework \cite{gottesman1998heisenberg,aaronson2004improved}.  Here, we define a quantifier of magic in the vectorization picture. Our approach is to expand  the vector $|O\rangle$ in the Bell basis with respect to the bipartition $(ab|cd)$  between bras and kets in the original operator $O$, and to define the {\em Bell basis coherence} (BBC) as the $\alpha$-order R\'enyi entropy     
\begin{align}
H_{BBC,\alpha}(O)  :   = H_\alpha  \left(   \{ |b_i|^2  \} \right) \, , 
\end{align}
where $\{b_i\}_i$ are the expansion coefficients in the decomposition  
$|O\rangle = \sum_i b_i |B_i\rangle$, where $\{ |B_i\rangle\}$ is the Bell basis.  When $O$ is the density matrix of a pure state,  $H_{BBC,\alpha}(O)$ coincides with the shifted {\em stabilizer entropy} \cite{leone2022stabilizer}, a common measure of magic for pure states. For general operators $O$, it coincides with the  {\em operator stabilizer Rényi entropy}~\cite{dowling2024magic}. (Also see SM \ref{bbcco}.)

A similar approach can be taken for coherence. We quantify the   \emph{computational basis coherence } (CBC) of an operator $O$ in terms of the $\alpha$-order R\'ernyi entropy 
\begin{align} H_{CBC,\alpha}(O) = H_\alpha(\{|o_{ij}|^2\}) \,, 
\end{align}
where $\{o_{ij}\}$ are the coefficients in the expansion  
$|O\rangle =\sum_{ij}o_{ij}|i\rangle|j\rangle$.  When $O$ is the density matrix of a pure state, the $ H_{CBC,\alpha}(O)$ coincides with the twice of the normal coherence definition with the computation basis as the reference basis~\cite{streltsov2017colloquium}. (Also see SM \ref{cbcco}.)

We can find the remarkable advantage of adopting the vectorization picture. While coherence and magic are fundamentally different resources in the operator space, we can give them a unified treatment in the vectorization picture: {\em both coherence and magic are coherence in the vectorization picture, and their difference is just a matter of the reference basis, the computational basis, or the Bell basis.}

\noindent\textit{\textbf{Magic and coherence bound entanglement.—}}After the above definitions, we can now formally present our results. We fist show how the SE growth of $UOU^\dag$ is bounded by magic and coherence. To do so, we define three free circuit families. The CBC-free (no coherence) circuit family $\mathcal{F}_{CBC}$ and BBC-free (no magic) circuit family $\mathcal{F}_{BBC}$ are defined as sets of unitaries $U$ satisfying 
$H_{CBC,\alpha}(O) = H_{CBC,\alpha}(UO U^\dagger)$ and 
$H_{BBC,\alpha}(O) = H_{BBC,\alpha}(UO U^\dagger)$, respectively, for arbitrary $|O\rangle$. Note that in the vectorization picture, we have $UOU^\dag\rightarrow U\otimes U^*|O\rangle$. $\mathcal{F}_{CBC}$ is generated by diagonal phase and permutation gates on the computational basis, such as $\{T,S,CX\}$ or $\{CCX,S,CX\}$, while $\mathcal{F}_{BBC}$ consists of Clifford circuits generated by $\{H,S,CX\}$. The zero-coherence circuit family $\mathcal{F}_{zero}$ is the intersection of $\mathcal{F}_{CBC}$ and $\mathcal{F}_{BBC}$ with circuits constructed from $\{S,CX\}$ as examples. We discuss similarities and differences between $\mathcal{F}_{CBC}$ and $\mathcal{F}_{BBC}$ in SM~\ref{mcd}.

The SE generation power under a circuit family $\mathcal{F}$ is characterized by
\begin{eqnarray} \label{eq:G_rate}
&&G_{\mathcal{F},\alpha}(O):= \sup_{U\in \mathcal{F}}H_{SE,\alpha}(UO U^\dagger) \!\!- \!\!H_{SE,\alpha}(O),
\end{eqnarray}
with $G_{CBC,\alpha}(O)$, $G_{BBC,\alpha}(O)$, and $G_{zero,\alpha}(O)$ defined for 
$\mathcal{F} = \mathcal{F}_{CBC}, \mathcal{F}_{BBC},$ and $\mathcal{F}_{zero}$, respectively.
It is evident that
\begin{eqnarray}
G_{zero,\alpha}(O) \leq \min\left\{G_{CBC,\alpha}(O), G_{BBC,\alpha}(O)\right\}.
\end{eqnarray}
The key question is the comparison between $G_{CBC,\alpha}(O)$ and $G_{BBC,\alpha}(O)$, understanding which will clarify which resource, CBC or BBC, is more effective for generating higher SE in $\ket{O}$ and provide deeper insights into their relations. To answer this question, we first establish a general upper bound 
on space entanglement in terms of coherence under arbitrary 
tensor-product bases.
\begin{lemma}\label{mainlemma}
For any set of orthogonal product basis 
$\{|\phi_j\rangle\}$ under the space bipartition in vectorization space, the decomposition 
$O=\sum_j c_j |\phi_j\rangle$ satisfies
\begin{eqnarray}
H_{\alpha}(\{|c_j|^2\})\geq H_{SE,\alpha}(O).
\end{eqnarray}
\end{lemma}
\noindent The proof is given in SM~\ref{the1proof} with majorization relations~\cite{Nielsen_2000}.

In this work, we focus on two canonical choices: 
the computational basis $\{|i\rangle\!\langle j|\}$ and 
the Pauli basis $\{P_j/\sqrt{2^n}\}$, whose coefficient entropies 
correspond to CBC (coherence) and BBC (magic), respectively. 
Applying Lemma~\ref{mainlemma} to these two bases immediately yields 
the following theorem.
\color{black}
\begin{theorem}\label{mainthe1}
$G_{zero,\alpha}(O)$ has the upper bound
\begin{eqnarray}\label{zeroub}
\min\left\{n,H_{CBC,\alpha}(O),H_{BBC,\alpha}(O)\right\}-H_{SE,\alpha}(O).
\end{eqnarray}
$G_{CBC,\alpha}(O)$ has the upper bound
\begin{eqnarray}\label{ubcbc}
\min\left\{n,H_{CBC,\alpha}(O)\right\}-H_{SE,\alpha}(O).
\end{eqnarray}
$G_{BBC,\alpha}(O)$ has the upper bound
\begin{eqnarray}\label{ubbbc}
\min\left\{H_{BBC,\alpha}(O),n\right\}-H_{SE,\alpha}(O).
\end{eqnarray}
\end{theorem}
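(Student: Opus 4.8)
\noindent\emph{Proof strategy.} The plan is to obtain all three inequalities from two state-independent facts about an arbitrary $2n$-qubit pure state together with the defining invariance of the three free families. The first fact is that the Schmidt rank of any $2n$-qubit state across the space bipartition $(S|\bar S)$ is at most $\min\{4^{|S|},4^{n-|S|}\}\le 2^{n}$, so $H_{SE,\alpha}(\cdot)\le n$ for every $\alpha\ge0$ (a Rényi entropy never exceeds the logarithm of the support size). This bound is preserved under every map $U\otimes U^{*}$ and supplies the $n$ that appears in all three expressions.

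The second, and essential, fact is the uniform bound $H_{SE,\alpha}(|\psi\rangle)\le H_{CBC,\alpha}(|\psi\rangle)$, valid for every $2n$-qubit $|\psi\rangle$ and every $\alpha\ge0$. To prove it, write $|\psi\rangle=\sum_{x\in S,\,y\in\bar S}c_{xy}\,|x\rangle|y\rangle$ across the space cut, so the reduced state on $S$ is $\rho_S=CC^{\dagger}$ with $C=[c_{xy}]$, and the squared Schmidt coefficients $\{s_k^{2}\}$ are exactly the eigenvalues of $\rho_S$. By the Schur--Horn theorem the spectrum of $\rho_S$ majorizes its diagonal $\{(\rho_S)_{xx}\}_x=\{\sum_y|c_{xy}|^{2}\}_x$; and this diagonal vector is obtained from $\{|c_{xy}|^{2}\}_{x,y}$ by summing within blocks, hence it majorizes $\{|c_{xy}|^{2}\}_{x,y}$ (merging the entries of a probability vector only increases it in the majorization order). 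By transitivity $\{|c_{xy}|^{2}\}\prec\{s_k^{2}\}$, and since $H_\alpha$ is Schur-concave we conclude $H_{SE,\alpha}(|\psi\rangle)=H_\alpha(\{s_k^{2}\})\le H_\alpha(\{|c_{xy}|^{2}\})=H_{CBC,\alpha}(|\psi\rangle)$.

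The analogous bound $H_{SE,\alpha}(|O\rangle)\le H_{BBC,\alpha}(|O\rangle)$ reduces to the previous one. The change of description from computational to Bell amplitudes, $|O\rangle\mapsto\sum_i b_i|i\rangle$, is implemented by the unitary $W$ sending the Bell basis (labelled by Pauli strings) to the computational basis; because a global Pauli is a tensor product of single-site Paulis and $\mathcal V$ maps each single-site Pauli to a single-site Bell state, $W$ is a product of single-site ``Bell transforms'' acting on each (bra-qubit, ket-qubit) pair, hence it factorizes across the space bipartition as $W=W_S\otimes W_{\bar S}$. Such a cut-local unitary does not change space entanglement, so $H_{SE,\alpha}(W|O\rangle)=H_{SE,\alpha}(|O\rangle)$, while $H_{CBC,\alpha}(W|O\rangle)=H_{BBC,\alpha}(|O\rangle)$ by construction; applying the majorization bound to $W|O\rangle$ gives the claim. (Equivalently one writes $|O\rangle$ directly in the product Bell basis and repeats the Schur--Horn argument with $C$ replaced by the array of Pauli coefficients.)

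It remains to assemble the pieces. For $U\in\mathcal F_{CBC}$, applying the two facts to $U\otimes U^{*}|O\rangle$ and using $H_{CBC,\alpha}(U\otimes U^{*}|O\rangle)=H_{CBC,\alpha}(|O\rangle)$ gives $H_{SE,\alpha}(U\otimes U^{*}|O\rangle)\le\min\{n,H_{CBC,\alpha}(|O\rangle)\}$; taking the supremum over $U$ and subtracting $H_{SE,\alpha}(|O\rangle)$ yields \eqref{ubcbc}. Bound \eqref{ubbbc} follows identically from Fact 1 and the BBC bound of the preceding paragraph together with the BBC-invariance of $\mathcal F_{BBC}$, and \eqref{zeroub} follows because $\mathcal F_{zero}=\mathcal F_{CBC}\cap\mathcal F_{BBC}$ preserves both resources (equivalently, it is immediate from $G_{zero,\alpha}\le\min\{G_{CBC,\alpha},G_{BBC,\alpha}\}$ combined with \eqref{ubcbc}--\eqref{ubbbc}). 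The only point requiring genuine care is the Bell step: one must check, with the chosen Choi-state conventions (the interleaving of bra- and ket-qubits and the normalization of $\mathcal V$), that $W$ really does factorize across the space cut; everything else is routine manipulation of Rényi entropies and of the definitions of the free families.
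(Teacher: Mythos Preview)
Your proof is correct and follows essentially the same route as the paper: the core is the majorization chain $\{|c_{\alpha\beta}|^{2}\}\prec\{\sum_\beta|c_{\alpha\beta}|^{2}\}\prec\{s_k^{2}\}$ (merging plus Schur--Horn, which the paper phrases as ``Corollary~4 in Ref.~\cite{Nielsen_2000}''), combined with Schur-concavity of $H_\alpha$ and the trivial Schmidt-rank bound $H_{SE,\alpha}\le n$. The only cosmetic difference is that the paper runs the majorization argument once for an arbitrary product basis $\{|\alpha\rangle_u|\beta\rangle_l\}$ across the space cut and then observes that both the computational and Bell bases are of this form, whereas you treat the computational basis first and reduce the Bell case to it via the cut-local unitary $W$; these are the same observation in different clothing.
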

\noindent Note that CBC and BBC are respectively sensitive to Hadamard and $T$ gates, which are responsible for coherence and magic generation, respectively. For a circuit $U$ with $N_H$ Hadamards and $N_T$ $T$ gates, we have (see SM \ref{cbcco} and \ref{bbcco}):
\begin{eqnarray}
H_{CBC,\alpha}(UO U^\dagger) &\leq& 2N_H + H_{CBC,0}(O), \label{nhr} \\
H_{BBC,\alpha}(UO U^\dagger) &\leq& N_T + H_{BBC,0}(O). \label{ntr}
\end{eqnarray}
Therefore, in the absence of additional coherence and magic resources ($H$ and $T$), $G_{zero,\alpha}(O)$ is governed by the existing coherence and magic in $|O\rangle$. 
To generate higher entanglement, additional coherence and magic resources are needed, especially the smaller one. 
Theorem \ref{mainthe1} explains why the stabilizer states can achieve maximal entanglement. 
Starting from $O=|0\rangle\langle 0|^{\otimes n}=\prod_{i=1}^n(I_n+Z_i)/2$, $H_{CBC,\alpha}(O)=0$ severely limits $G_{zero,\alpha}(O)$, therefore, $H$ in $\mathcal{F}_{BBC}$ are necessary to increase RHS of Theorem \ref{mainthe1}. On the other hand, we have $H_{BBC,\alpha}(O)=n$ already reaches the largest achievable SE, meaning no need to introduce additional magic resources. Therefore, we have $0=G_{zero,\alpha}\br{|0\rangle^{\otimes n}| 0\rangle^{\otimes n}}=G_{CBC,\alpha}\br{|0\rangle^{\otimes n}| 0\rangle^{\otimes n}}$ while $G_{BBC,\alpha}\br{|0\rangle^{\otimes n}| 0\rangle^{\otimes n}}=n$.
Similar reasoning applies to $O=P$, with opposite behaviors. 


\noindent\textit{\textbf{BKE decides magic and coherence.—}}After showing that the entanglement growth is bounded by the smaller one between the coherence and the magic, now the second step is to understand what determines coherence and magic. From above, we observed that for $O=|0\rangle\langle 0|^{\otimes n}$ with minimal BKE, coherence is required to generate higher entanglement, while for $O=P$, with maximal BKE, magic is necessary to generate higher entanglement. Therefore, the BKE of $|O\rangle$ appears to influence its coherence and magic. We now show this is indeed the case, but before going on, we need to introduce the {\em{Fourier bra-ket entanglement}} (Fourier-BKE). Fourier-BKE is defined through coefficients $\{\lambda_i\}$, which are obtained from the Walsh-Hadamard transform \cite{o2014analysis} to the Schmidt coefficients $\{\gamma_i\}$ of BKE. $\{\gamma_i\}$ satisfies $\lambda_i\geq 0$ and $\sum_i\lambda_i^2=1$. Two sets of coefficients are related through $\Sigma=\sum_i\gamma_i |i\rangle\langle i|=2^{-n/2}\sum_i \lambda_i Q_i$ with $Q_i\in\{\pm 1\}\times\{I,Z\}^{\otimes n}$ (See SM \ref{BKE}). The Rényi entropy on $\{\lambda_i\}$ is referred to as the Fourier-BKE Rényi entropy
\begin{align}
H_{FBKE,\alpha}(O):= H_{\alpha}\br{\{\lambda_i^{2}\}} .
\end{align}
For example, when $O=|i\rangle\langle j|$, $|O\rangle=|i\rangle|j\rangle$ is a product state under bra-ket bipartition with minimal BKE but maximal Fourier-BKE. 
In contrast, when $O=2^{-n/2}P_i$ is a Pauli basis, $|O\rangle=|B_i\rangle$ is a Bell state with maximal BKE but minimal Fourier-BKE.
Both BKE and Fourier-BKE are conserved values under unitary transformations.

After the introduction of both BKE and Fourier-BKE, we are now ready to show Theorem~\ref{mainthe2}.
\begin{theorem}\label{mainthe2}
The CBC and BBC of $|O\rangle$ have the lower bounds 
\begin{eqnarray}
H_{CBC,\alpha}\br{O}&&\geq n-H_{FBKE,1/2}\br{O},\\
H_{BBC,\alpha}\br{O}&&\geq n-H_{BKE,1/2}\br{O}.
\end{eqnarray}
\end{theorem}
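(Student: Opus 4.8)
\emph{Proof plan.} The plan is to collapse the whole family of inequalities (one per $\alpha$) onto a single amplitude bound, and then establish that bound by elementary unitarity estimates. Since $\alpha\mapsto H_\alpha(\{p_i\})$ is non‑increasing and the right‑hand sides are $\alpha$‑independent, it suffices to treat the extreme case $\alpha=\infty$, i.e.\ to show $H_{CBC,\infty}(|O\rangle)\geq n-H_{FTE,1/2}(|O\rangle)$ and $H_{BBC,\infty}(|O\rangle)\geq n-H_{TE,1/2}(|O\rangle)$; the general $\alpha$ then follows from $H_{CBC,\alpha}\geq H_{CBC,\infty}$ and $H_{BBC,\alpha}\geq H_{BBC,\infty}$. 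Using the generic identities $H_\infty(\{p_i\})=-\log\max_i p_i$ and $2^{H_{1/2}(\{p_i\})/2}=\sum_i\sqrt{p_i}$, applied to the CBC/FTE and BBC/TE distributions, these two $\alpha=\infty$ claims become the single‑amplitude bounds
\begin{eqnarray}
\max_{ij}\abs{o_{ij}}\ \leq\ 2^{-n/2}\sum_k\lambda_k,\qquad
\max_{P}\abs{b_P}\ \leq\ 2^{-n/2}\sum_i\gamma_i, \nonumber
\end{eqnarray}
i.e.\ each computational‑basis coefficient $o_{ij}=\langle i|O|j\rangle$ and each Bell‑basis (Pauli) coefficient $b_P$ must be controlled by the $\ell_1$‑norm of, respectively, the Fourier‑TE and TE amplitude vectors.

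For the first bound I would use the singular value decomposition $O=U\Sigma V^\dag$ with $U,V$ unitary, so that $\Sigma=\sum_i\gamma_i|i\rangle\langle i|$ carries the time‑Schmidt coefficients on its diagonal. By the defining expansion of the Fourier‑TE coefficients, $\Sigma=2^{-n/2}\sum_k\lambda_k Q_k$ with each $Q_k\in\{\pm1\}\times\{I,Z\}^{\otimes n}$ unitary, hence $O=2^{-n/2}\sum_k\lambda_k\,UQ_kV^\dag$ is a nonnegatively weighted combination of unitaries. Taking the $(i,j)$ matrix element, $o_{ij}=2^{-n/2}\sum_k\lambda_k\langle i|UQ_kV^\dag|j\rangle$, and since every entry of a unitary matrix has modulus at most $1$, the triangle inequality yields $\abs{o_{ij}}\leq 2^{-n/2}\sum_k\lambda_k$ for all $i,j$, which is the first bound.

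For the second bound, the Bell‑basis amplitude is $b_P=\langle B_P|O\rangle=2^{-n/2}\Tr\br{P^\dag O}$ from the vectorization inner product $\langle\mathcal V[A]|\mathcal V[B]\rangle=\Tr(A^\dag B)$. Substituting the same SVD, $b_P=2^{-n/2}\Tr\br{P^\dag U\Sigma V^\dag}=2^{-n/2}\sum_i\gamma_i\langle i|V^\dag P^\dag U|i\rangle$, and since $V^\dag P^\dag U$ is unitary each of its diagonal entries has modulus at most $1$, so $\abs{b_P}\leq 2^{-n/2}\sum_i\gamma_i$ for every Pauli $P$. Combined with the $\alpha$‑monotonicity step, this closes both inequalities.

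I do not expect a genuinely hard step here: the content is the $\alpha=\infty$ reduction plus one triangle inequality. The points that need care are bookkeeping rather than conceptual — keeping the $2^{-n/2}$ normalizations consistent across $\Sigma$, its $Q_k$‑expansion, and the normalization of $|B_P\rangle$; verifying that $2^{H_{1/2}/2}$ is exactly the $\ell_1$‑norm of the relevant amplitude vector, which is what the triangle inequality produces and is precisely what pins the Rényi order on the right‑hand side to $1/2$; and noting that the only property used of $Q_k$ and of Pauli operators is unitarity, so the argument never invokes Hermiticity of $O$. No convexity or other entropic inequalities beyond monotonicity of $H_\alpha$ in $\alpha$ are needed.
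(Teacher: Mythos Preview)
Your proposal is correct, and the core argument is the same as the paper's: both rest on the SVD $O=U\Sigma V^\dag$ together with the expansion $\Sigma=2^{-n/2}\sum_k\lambda_kQ_k$ (resp.\ $\Sigma=\sum_i\gamma_i|i\rangle\langle i|$) to obtain the identical single-amplitude bounds $\abs{o_{ij}}\leq 2^{-n/2}\sum_k\lambda_k$ and $\abs{b_P}\leq 2^{-n/2}\sum_i\gamma_i$ via the triangle inequality and unitarity. The only difference is organizational: the paper carries the amplitude bound into $|o_{ij}|^{2\alpha}\lessgtr(\text{bound})^{2\alpha-2}|o_{ij}|^2$ and sums, splitting into the cases $\alpha<1$, $\alpha>1$, and a separate $\alpha\to1$ limit, whereas you first reduce to $\alpha=\infty$ via $H_\alpha\geq H_\infty$ and then read off $H_\infty$ directly from the amplitude bound. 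Your route is slightly cleaner—no case split, no limit—while the paper's direct computation makes it transparent that the slack in the inequality for finite $\alpha$ comes entirely from the gap between $\max|o_{ij}|$ and the individual $|o_{ij}|$'s; either way the substance is the same.
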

\noindent Detailed proofs can be found in SM \ref{the2proof}. Theorem \ref{mainthe2} implies that BKE and Fourier-BKE set lower bounds for BBC and CBC. In this sense, coherence and magic are like conjugate quantities that share the same position in deciding entanglement growth. Since CBC and BBC involve projecting $|O\rangle$ onto two different bases and BKE and Fourier-BKE are related by the Walsh-Hadamard transform, each of them also satisfies an entropic uncertainty principle \cite{coles2017entropic}.
\begin{theorem}\label{mainthe3}
CBC-BBC uncertainty principle and BKE Fourier-BKE uncertainty principle say
\begin{eqnarray}
H_{CBC,\alpha}(O)+H_{BBC,\beta}(O)&&\geq n,\\
H_{BKE,\alpha}(O)+H_{FBKE,\beta}(O)&&\geq n, 
\end{eqnarray}
where $\alpha$ and $\beta$ satisfy $1/\alpha+1/\beta=2$.
\end{theorem}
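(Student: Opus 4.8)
\textit{Proof sketch.---} The plan is to obtain both inequalities as instances of the R\'enyi version of the Maassen--Uffink entropic uncertainty relation \cite{coles2017entropic}: for two orthonormal bases $\{|a_j\rangle\}$ and $\{|b_k\rangle\}$ of a finite-dimensional Hilbert space with mutual overlap $c:=\max_{j,k}|\langle a_j|b_k\rangle|$, and for any normalized pure state $|\psi\rangle$ with Born distributions $p_j=|\langle a_j|\psi\rangle|^2$ and $q_k=|\langle b_k|\psi\rangle|^2$, one has $H_\alpha(\{p_j\})+H_\beta(\{q_k\})\geq -2\log c$ whenever $1/\alpha+1/\beta=2$. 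For $\alpha,\beta>0$ this constraint automatically puts one exponent into $(1/2,1]$ and the other into $[1,\infty]$, which is the admissible regime (with $\alpha=\beta=1$ the Shannon case), so I would cite this relation and then, for each claim, merely identify the two bases and evaluate $c$.

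For the CBC--BBC inequality, take the Hilbert space to be the $2^{2n}$-dimensional vectorization space, let $\{|a\rangle\}$ be the computational basis $\{|i\rangle|j\rangle\}$, whose Born distribution on $|O\rangle$ is $\{|o_{ij}|^2\}$ (so $H_\alpha=H_{CBC,\alpha}(|O\rangle)$), and let $\{|b\rangle\}$ be the Bell basis $\{|B_k\rangle\}$ (so $H_\beta=H_{BBC,\beta}(|O\rangle)$). The overlap follows from $|B_k\rangle=\mathcal{V}[2^{-n/2}P_k]=(P_k\otimes I_n)\,2^{-n/2}\sum_l|l\rangle|l\rangle$, giving $\langle i,j|B_k\rangle=2^{-n/2}\langle i|P_k|j\rangle$; since every matrix element of an $n$-qubit Pauli operator lies in $\{0,\pm1,\pm i\}$ we get $|\langle i,j|B_k\rangle|\leq 2^{-n/2}$, with equality attained (e.g.\ $P_k=I_n$, $i=j$). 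Hence $c=2^{-n/2}$ and $-2\log c=n$, which is the claimed bound.

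For the TE--Fourier-TE inequality, the relevant object is the $2^n$-dimensional unit vector $\vec\gamma=(\gamma_i)$ of time-bipartition Schmidt coefficients, equivalently the diagonal of $\Sigma$; its Born distribution in the computational basis is $\{\gamma_i^2\}$, giving $H_{TE,\alpha}(|O\rangle)$. The expansion $\Sigma=2^{-n/2}\sum_i\lambda_i Q_i$ with $Q_i\in\{\pm1\}\times\{I,Z\}^{\otimes n}$ is precisely the rewriting of $\vec\gamma$ in the Walsh--Hadamard basis, i.e.\ the action of the orthogonal matrix $2^{-n/2}H^{\otimes n}$, so $\{\lambda_i^2\}$ is the Born distribution in that basis, giving $H_{FTE,\beta}(|O\rangle)$ (the signs carried by the $Q_i$ are irrelevant since only $\lambda_i^2$ enters). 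Every entry of $2^{-n/2}H^{\otimes n}$ has modulus $2^{-n/2}$, so the two bases are mutually unbiased, $c=2^{-n/2}$, $-2\log c=n$, and Maassen--Uffink closes the argument.

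I do not expect a genuine obstacle: the entire content is the two overlap evaluations, the rest being a citation. The one place to be careful is the normalization --- checking that $\mathcal{V}[2^{-n/2}P_k]$ is a true unit vector and that no Pauli matrix element exceeds $1$ in modulus, so that $c$ is exactly $2^{-n/2}$ and the constant is $n$ rather than larger; this is also why fixing $\|O\|_F=1$ at the outset is used. A secondary remark worth including is that the two relations live in spaces of different dimension ($2^{2n}$ versus $2^n$) yet both bounds equal $n$: in the $2^{2n}$-dimensional case the computational and Bell bases are far from mutually unbiased (each computational vector overlaps only $2^n$ of the $2^{2n}$ Bell vectors), so $c=2^{-n/2}$, not $2^{-n}$.
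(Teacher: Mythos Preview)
Your proposal is correct and follows essentially the same route as the paper: both inequalities are obtained as direct instances of the R\'enyi Maassen--Uffink entropic uncertainty relation (Lemma~\ref{entro}), with the sole content being the overlap computations $\max_{i,j,k}|\langle i,j|B_k\rangle|=2^{-n/2}$ and the mutual unbiasedness of the computational and Walsh--Hadamard bases. The only cosmetic difference is that for the TE--Fourier-TE bound the paper vectorizes $\Sigma$ to $|\Sigma\rangle=\sum_i\gamma_i|i\rangle|i\rangle=\sum_i\lambda_i|Q_i\rangle$ and reads off the overlap $|\langle i,i|Q_j\rangle|=2^{-n/2}$ in the $2n$-qubit space, whereas you work directly with $\vec\gamma$ in the $2^n$-dimensional space; these are the same computation.
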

\noindent Detailed proofs can be found in SM \ref{BKE} and \ref{the1proof}. Due to the uncertainty principles in Theorem \ref{mainthe3}, CBC and BBC can neither be simultaneously small, nor can their lower bounds be simultaneously large.

\noindent\textit{\textbf{Finishing the picture.—}}Combining Theorems~\ref{mainthe1}--\ref{mainthe3}, 
the relation between entanglement, magic, and coherence can now be summarized in Fig.~\ref{fig2}. 
When BKE is low, Theorem~\ref{mainthe2} forces BBC to be large while CBC can be small, so CBC becomes the bottleneck for entanglement growth and coherence ($H$ gates) is the dominant resource for increasing SE. 
Conversely, when BKE is high (Fourier-BKE low), CBC is forced large while BBC can be small, making magic ($T$ gates) the dominant resource. 

This picture becomes fully quantitative for operators that saturate the inequalities in Theorems~\ref{mainthe2} and~\ref{mainthe3}, which we call \emph{BKE-matching operators} (See formal definition in SM~\ref{the2proof}). 
For such operators with $H_{\mathrm{BKE},1/2}(O)=r$, the CBC and BBC values are exactly determined: $H_{\mathrm{CBC},\alpha}(O)=r$ and $H_{\mathrm{BBC},\alpha}(O)=n-r$ (see Lemma~\ref{ll7} in SM~\ref{the2proof}). Combined with the gate-counting bounds in Eqs.~\eqref{nhr}--\eqref{ntr}, this yields a precise gate-complexity characterization.

\begin{figure}[htpb]
\includegraphics[width=0.7\linewidth]{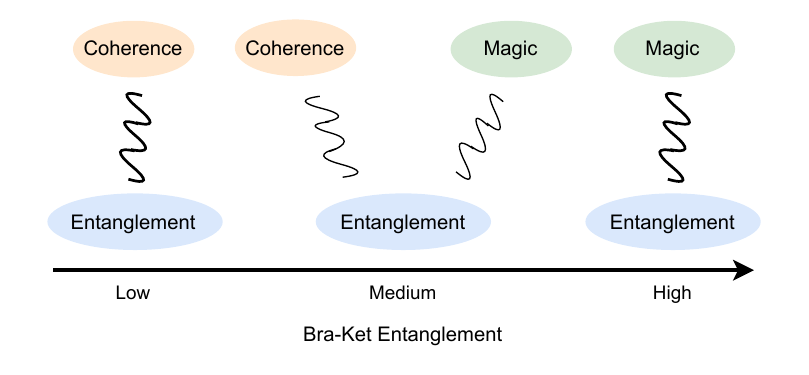}
\caption{\small{Relating (space) entanglement to CBC (coherence) and BBC (magic) via BKE. When BKE of $O$ is low, the SE is bounded by CBC, and therefore increasing SE requires introducing coherence resources. When BKE of $O$ is high, the SE is bounded by BBC, and therefore increasing SE requires introducing magic resources. When BKE is medium, both magic and coherence can relate SE, but not as strongly as in the extreme points.}
\label{fig2}}
\end{figure}

\begin{corollary}\label{lem=SE_bond}
For a BKE-matching operator $O$ with fixed bra-ket entanglement $H_{\mathrm{BKE},1/2}(O)=r$, if a circuit $U$ composed of $\{H,S,\mathrm{CNOT},T\}$ gates with $N_T$ number of $T$-gate and $N_H$ number of $H$-gate satisfies $H_{\mathrm{SE},\alpha}(U^\dagger O\, U)\geq\eta$, then
$N_T \geq \eta-(n-r), N_H \geq \eta-r$.
\end{corollary}

\noindent The proof follows from Lemma~\ref{ll7} in SM~\ref{the2proof}. 
When $r=0$ (e.g.\ $O=|0\rangle\!\langle 0|^{\otimes n}$), the $N_T$ bound becomes trivial while $N_H\geq\eta$ dominates, recovering the coherence-limited regime discussed after Theorem~\ref{mainthe1}: stabilizer states achieve maximal entanglement without magic. 
When $r=n$ (e.g.\ $O=P$), the situation reverses and $N_T\geq\eta$ dominates, explaining the entanglement--magic correlation observed in Heisenberg evolution of Pauli operators~\cite{dowling2025bridging}. 
At intermediate BKE, both bounds impose comparable constraints, so neither coherence nor magic alone capture the tensor-network complexity.

To verify the picture, in Fig.~\ref{fig3}(a) we numerically~\cite{xu2024mindspore,ChanceSiyuan_repo} show the SE of $U^\dagger O U$ with BKE-matching operators $O \in \{X^{\otimes 6},\ket{0}\!\bra{0}^{\otimes 6},X^{\otimes 3} \otimes\ket{0}\!\bra{0}^{\otimes 3}\}$. $U$ is generated randomly from the local gate set $\{CX,H,CCX,S\}$ with different $CCX$ rates and $H$ rates, indicating different levels of CBC and BBC resources. The reason of using $CCX$ as the magic gate instead of $T$ can be found in SM~\ref{whyccx}. Consistent with our theoretical predictions, for $X^{\otimes 6}$ with maximal BKE, SE is more sensitive to BBC resource, while for $|0\rangle\langle 0|^{\otimes 6}$ with minimal BKE, SE is more sensitive to CBC resource. For $X^{\otimes 3} \otimes\ket{0}\!\bra{0}^{\otimes 3}$, we have $H_{CBC,\alpha}(O)=H_{BBC,\alpha}(O)=3\leq 6$, making both CBC and BBC resources needed.

In Fig. \ref{fig3}(b), we quantify how the space-entanglement changes for BKE-matching operators $ O \in \{X^{\otimes 10} \otimes|0\rangle\!\langle 0|^{\otimes (10- s)}\}_{s = 0}^{10}$ in the $10$-qubits case, after the implementation of $U$ in three families $\mathcal{F}_{zero},\mathcal{F}_{BBC}$, and $\mathcal{F}_{CBC}$ respectively. For BKE-matching operators, we can find numerical results exactly match Theorem \ref{mainthe2}, where low BKE requires more CBC resources in $\mathcal{F}_{BBC}$ and high BKE needs more BBC resources in $\mathcal{F}_{CBC}$. Additionally, we observe that $U\in \mathcal{F}_{zero}$ can create a decent level of SE for $O$ in the middle, which is well captured by Theorem \ref{mainthe1} and is because these $O$ contain a decent amount of both BBC and CBC resources already.

\begin{figure}
\centering
\includegraphics[width=0.7\linewidth]{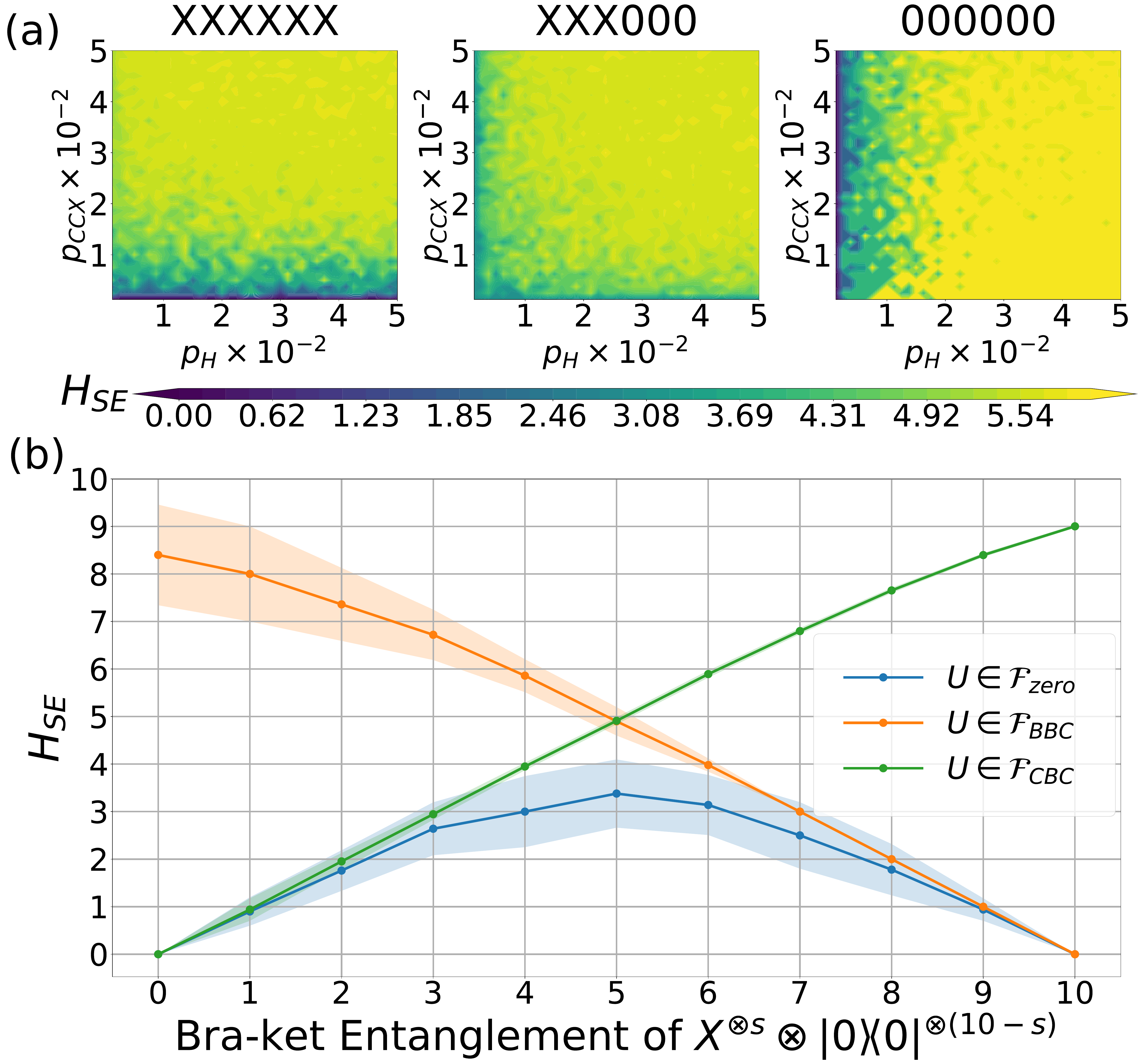}
\caption{\small{(a): Gradient diagram of SE of $UOU^\dag$ in different CBC and BBC resources. $U$ is composed of many layers (deep enough to approach allowable SE upper bounds). Each layer is randomly composed of $\{CX,H,CCX,S\}$ with $p_{CCX}$ and $p_H$ denotes the picking rate. (b): SE of $UOU^\dag$ as functions of BKE. $ O \in \{X^{\otimes s} \otimes|0\rangle\!\langle 0|^{\otimes (10- s)}\}_{s = 0}^{10}$ are BKE-matching operators with BKE ranging from 0 to 10. $U$ are deep circuits randomly chosen from $\mathcal{F}_{CBC}$ made up of $\{CCX,S,CX\}$ (SM \ref{whyccx}), $\mathcal{F}_{BBC}$ made up of $\{H,S,CX\}$, and $\mathcal{F}_{zero}$ made up of $\{S,CX\}$ respectively.
In both sub-figures, we choose $\alpha=1$ for $H_{SE}$.\label{fig3}}}
\end{figure}

\noindent\textit{\textbf{Implications for classical simulation.—}}We now show three implications of our results on classical simulations. The first two focus on the resource dependence transition for simulation methods related to entanglement, such as matrix product states and operators~\cite{perez2006matrix,cirac2021matrix}. The last one focus on the relations between different types of simulation methods.

{\em Resource transition in marginal probability simulation.} BKE-matching operators arise naturally in practical quantum simulation tasks (see Appendix~\ref{app:application} for details and other generalizations). 
As a key example, computing marginal probability distributions~\cite{bravyi2022simulate} over $n_2$ qubits of an $n=(n_1+n_2)$-qubit circuit $U$ amounts to evaluating 
\begin{align}
p(\mathbf{s})=\mathrm{tr}[|0\rangle\!\langle 0|^{\otimes n}
U^\dagger(\mathbb{I}_{n_1}\otimes|\mathbf{s}\rangle\!
\langle\mathbf{s}|U)].
\end{align}
The effective observable $O=\mathbb{I}_{n_1}\otimes|\mathbf{s}\rangle\!\langle\mathbf{s}|$ is a BKE-matching operator with $r=n_1$, so Corollary~\ref{lem=SE_bond} directly yields $N_T\geq\eta-n_2$ and $N_H\geq\eta-n_1$. 
This interpolates between the coherence-limited regime for amplitude estimation ($n_1=0$) and the magic-limited regime for Pauli average estimation ($n_1=n-1$)---two tasks that, even for the same $T$-depth-one circuit, exhibit a sharp complexity separation from GapP-complete to P~\cite{zhang2025classical}. 


{\em Resource dependence transition in mixed state simulation.} When $O=\rho$ is a density matrix, the order-2 BKE is then related to the purity of the state
\begin{eqnarray}
H_{\mathrm{BKE},2}(\rho) = -\log\mathrm{tr}(\rho^2).
\end{eqnarray}
Thus, using our estabilished picture above, we have the story for mixed state simulation. As purity decreases, BKE increases, shifting the dominant resource for SE growth from coherence ($N_H$) to magic ($N_T$).
This implies that high-temperature quantum simulations are more sensitive to magic, while low-temperature ones are more sensitive to coherence.

{\em Relating different classical simulation methods.} There are classical simulation methods that are directly related to entanglement, coherence, and magic, respectively. Examples are tensor-network-based methods (entanglement)~\cite{orus2019tensor,cirac2021matrix,perez2006matrix}, Stabilizer and Pauli truncation-based methods (magic)~\cite{gottesman1998heisenberg,aaronson2004improved,schuster2024polynomial,aharonov2023polynomial,angrisani2025simulating,angrisani2024classically}, and Feynman path-based methods (coherence)~\cite{burgholzer2021hybrid,dawson2004quantum}. Since Theorem \ref{mainthe1}, can be understood as BBC and CBC are lower bounded by SE, therefore, whenever a quantum system $UOU^\dag$ is difficult for entanglement-related methods, it must also be difficult for magic and coherence-related methods. Also, when BKE is low, entanglement-related methods can be more similar to coherence-related methods in terms of the algorithm complexity. In contrast, when BKE is high, entanglement-related methods will be more approaching to magic-related methods.

\noindent\textit{\textbf{Outlooks.---}}The desire to connect different concepts is rooted in the genes of physics. In this work, by introducing bra-ket entanglement as a diagnostic quantity, we have successfully bridged entanglement, coherence, and magic in a unified picture. We anticipate our results to have both significant impacts in quantum resource theories and in understanding the boundary between classical simulatabilities and quantum advantages.

Several open directions remain. Extending our framework to open quantum systems seems natural but challenging: quantum channels are merely non-unitary gates in the vectorization space, but can alter BKE. Also, although the computational basis and the Pauli basis can be viewed as gauge choices and can be unitarily transformed to other bases with the same BKE, they correspond to extreme minimal and maximal BKE values,
respectively. This raises the question of whether such extreme bases possess special positions, and whether results
such as Theorems 2 can be generalized to arbitrary matrix bases. Finally, since both CBC and BBC have non-zero lower bounds and therefore non-zero upper bounds for SE, no existing simulation methods have efficient and bounded complexity even for the circuit family $\mathcal{F}_{zero}$. Thus, we suggest for exploring specially-designed simulation methods for operators with intermediate BKE.

\textbf{Code: }We apply Mindspore Quantum~\cite{xu2024mindspore} for our numerical simulations with the source code~\cite{ChanceSiyuan_repo}.

\begin{acknowledgments}
The authors would like to thank Jakub Czartowski, Tianfeng Feng, Ray Ganardi, Tobias Haug, Chao-Yang Lu, Zhenhuan Liu, Nelly Ng, Jeongrak Son, Yunlong Xiao, Cheng-Cheng Yu, Wang Yao, Chengkai Zhu, and Xingjian Zhang for relevant discussions. This work was sponsored by CPS-Yangtze Delta Region IndustrialInnovation Center of Quantum and Information Technology-MindSpore Quantum Open Fund. Z.S. and Q.Z. acknowledge the support from Innovation Program for Quantum Science and Technology via Project 2024ZD0301900, National Natural Science Foundation of China (NSFC) via Project No. 12347104 and No. 12305030, Guangdong Basic and Applied Basic Research Foundation via Project 2023A1515012185, Hong Kong Research Grant Council (RGC) via No. 27300823, N\_HKU718/23, and R6010-23, Guangdong Provincial Quantum Science Strategic Initiative No. GDZX2303007, HKU Seed Fund for Basic Research for New Staff via Project 2201100596. Z.S. also acknowledges the support of HK Institute of Quantum Science and Technology.  G.C. acknowledges support from the Chinese Ministry of Science and Technology (MOST) through grant 2023ZD0300600, by the Hong Kong Research Grant Council (RGC) through grants  SRFS2021-7S02  and R7035-21F, and by the State Key Laboratory of Quantum Information Technologies and Materials. This work is also supported by the CPS-Yangtze Delta Region Industrial Innovation Center of Quantum and Information Technology-MindSpore Quantum Open Fund.

\end{acknowledgments}

\bibliography{ref}
\bibliographystyle{unsrt}
\clearpage

\begin{appendix}
\onecolumngrid
\renewcommand{\addcontentsline}{\oldacl}
\renewcommand{\tocname}{Supplementary material}
\tableofcontents
This supplementary material can be seen as a more detailed version compared to the main text.
\section{Preliminaries}
\subsection{Rényi entropy and entropic uncertainty principle}
\begin{definition}
Given a $D$-dimensional probability distribution $\{p_i\}$, the $\alpha$-order Rényi entropy \cite{bromiley2004shannon} with $0\leq\alpha\leq \infty$ and $\alpha\neq 1$ is defined as
\begin{eqnarray}
H_\alpha\br{\{p_i\}}=\frac{1}{1-\alpha}\log\br{\sum_i p_i^\alpha}.
\end{eqnarray}
\end{definition}
\noindent Throughout this work, we will use $\log\br{\cdot}$ with base 2. $H_\alpha\br{\{p_i\}}$ at $\alpha=1$ and $\alpha=\infty$ are defined by taking the limit, and we have
\begin{eqnarray}
H_1\br{\{p_i\}}=\lim_{\alpha\rightarrow 1}H_\alpha\br{\{p_i\}}&&=-\sum_i p_i\log\br{p_i}\text{, Shannon entropy,}\nonumber\\
H_\infty\br{\{p_i\}}=\lim_{\alpha\rightarrow \infty}H_\alpha\br{\{p_i\}}&&=-\log\br{\max_i p_i}.
\end{eqnarray}
Rényi entropy with different orders satisfy the relation
\begin{eqnarray}\label{renyiorder}
\log(D)=H_0(\{p_i\})\geq H_\alpha(\{p_i\})\geq H_\beta(\{p_i\})\geq H_\infty(\{p_i\}),\quad\forall~ 0\leq\alpha\leq\beta\leq\infty.
\end{eqnarray}

In quantum mechanics, measurements on different bases have the uncertainty principle, which can be well described through Rényi entropy. For any quantum state $\rho$, the relation between the entropy of the measurement probability distribution $\{p_{a,i}\}$ on the orthonormal basis $\{|i_a\rangle\langle i_a|\}$ and the entropy of the measurement probability distribution $\{p_{b,i}\}$ on the orthonormal basis $\{|i_b\rangle\langle i_b|\}$ satisfies the following lemma.
\begin{lemma}\label{entro}
The entropic uncertainty principle \cite{coles2017entropic} says
\begin{eqnarray}
H_{\alpha}\br{p_{a,i}}+H_{\beta}\br{p_{b,i}}\geq \log\br{\frac{1}{\max_{i,j} \abs{\langle i_a|j_b\rangle }^2}},\quad\forall~ \alpha,\beta\geq \frac{1}{2}\quad\text{and}\quad \frac{1}{\alpha}+\frac{1}{\beta}=2.
\end{eqnarray}
\end{lemma}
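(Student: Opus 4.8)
The statement is the Maassen--Uffink entropic uncertainty relation, so my plan is to reproduce its interpolation-based proof and arrange the bookkeeping so that the exponent constraint $1/\alpha+1/\beta=2$ emerges naturally from a H\"older condition. Let $U_{ij}=\langle i_a|j_b\rangle$ be the (unitary) change-of-basis matrix and $c=\max_{i,j}|U_{ij}|$, so that the right-hand side is $-2\log c$. For a pure state $|\phi\rangle$ I would write amplitudes $x_i=\langle i_a|\phi\rangle$ and $y_j=\langle j_b|\phi\rangle$, giving $p_{a,i}=|x_i|^2$, $p_{b,i}=|y_j|^2$, and $y=U^\dagger x$ as the image of $x$ under a fixed linear map $T$. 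The first step is to bound $T$ at two endpoints: unitarity gives $\|T\|_{\ell^2\to\ell^2}=1$, while $|y_j|\le\sum_i|U_{ij}||x_i|\le c\,\|x\|_1$ gives $\|T\|_{\ell^1\to\ell^\infty}\le c$.

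The second step is Riesz--Thorin interpolation between these endpoints: for $\theta\in[0,1]$ one obtains $\|T\|_{\ell^p\to\ell^q}\le c^{\theta}$ with $1/q=(1-\theta)/2$ and $1/p=(1+\theta)/2$, hence $1/p+1/q=1$ and $\theta=1-2/q$. I would then substitute $p=2\alpha$, $q=2\beta$; the H\"older relation $1/p+1/q=1$ becomes exactly $1/\alpha+1/\beta=2$, and the endpoint ranges force $\alpha\in[\tfrac12,1]$, $\beta\in[1,\infty]$, with the opposite orientation following by symmetry. The third step converts $\|y\|_{2\beta}\le c^{(\beta-1)/\beta}\|x\|_{2\alpha}$ into entropies: taking logarithms and using $\log\sum_j p_{b,j}^\beta=(1-\beta)H_\beta(\{p_{b,i}\})$ and $\log\sum_i p_{a,i}^\alpha=(1-\alpha)H_\alpha(\{p_{a,i}\})$ yields
\begin{equation}
\frac{1-\beta}{2\beta}H_\beta(\{p_{b,i}\})\le\frac{\beta-1}{\beta}\log c+\frac{1-\alpha}{2\alpha}H_\alpha(\{p_{a,i}\}).
\end{equation}
The pivotal simplification is that the constraint $1/\alpha+1/\beta=2$ gives $\tfrac{1-\alpha}{2\alpha}=-\tfrac{1-\beta}{2\beta}$, so both entropies acquire the common prefactor $\tfrac{1-\beta}{2\beta}$ and the inequality collapses to $\tfrac{1-\beta}{2\beta}\big(H_\alpha(\{p_{a,i}\})+H_\beta(\{p_{b,i}\})\big)\le\tfrac{\beta-1}{\beta}\log c$. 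Since $\tfrac{1-\beta}{2\beta}<0$ for $\beta>1$, dividing through flips the sign and delivers $H_\alpha+H_\beta\ge-2\log c=\log(1/c^2)$, which is the claim; the boundary case $\alpha=\beta=1$ follows by continuity of $H_\alpha$ in the order.

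The hard part will be the passage from pure states to an arbitrary $\rho$: the computation above is clean for $|\phi\rangle$, but extending it by naive convexity fails because $H_\alpha$ is not concave in the distribution for $\alpha>1$, precisely the regime forced on one of the two orders. I would instead run the interpolation directly at the operator level through the Hadamard three-lines theorem applied to an analytic family built from $\rho$, which is the original Maassen--Uffink route and handles mixed states uniformly without invoking concavity. I note, however, that in this paper the relation is only ever applied to the \emph{pure} vectorization (Choi) state $|O\rangle$, so the pure-state argument already suffices for Theorems~\ref{mainthe2} and~\ref{mainthe3}; specializing to the computational/Bell bases (resp. the time/Fourier-time bases) fixes $c^2=2^{-n}$ and reproduces the bound $n$ used there.
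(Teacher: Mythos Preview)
The paper does not actually prove this lemma; it is stated as a known result with a citation to the review \cite{coles2017entropic} and then invoked as a black box in Lemma~\ref{TEFTE} and Theorem~\ref{the2}. Your proposal is correct and reproduces the standard Maassen--Uffink argument via Riesz--Thorin interpolation between the $\ell^2\!\to\!\ell^2$ and $\ell^1\!\to\!\ell^\infty$ endpoints of the change-of-basis map, with the algebra that collapses the two prefactors under $1/\alpha+1/\beta=2$ carried out cleanly. Your remark that the paper only ever applies the relation to the pure vectorization states $|O\rangle$ and $|\Sigma\rangle$ is accurate, so the pure-state version you derive already suffices for every downstream use; the mixed-state extension you flag as subtle is never needed here.
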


\subsection{Vectorization\label{bpcc}}

We will mainly use the vectorization picture for the presentation.
\begin{definition}[Vectorization]
The mapping vectorization
$\mathcal{V}[O]=\ket{O\rangle}$ maps a $n$-qubit matrix $O=\sum_{ij}o_{ij}|i\rangle\langle j|$ in operator space $\mathbb{C}^{2^n\times 2^n}$ to a $2n$-qubit vector $\ket{O\rangle}=\br{O\otimes I_n}\sum_j|j\rangle|j\rangle=\sum_{ij}o_{ij}|i\rangle|j\rangle$ in the vectorization space $\mathbb{C}^{2^{2n}}$. 
\end{definition}
\noindent $\ket{O\rangle}$ is also known as the Choi state of $O$ \cite{choi1975completely}. 

Here, we summarize the basic properties in the vectorization picture.
\begin{itemize}
\item \textbf{Norm and overlap.} The Frobenius norm of $O$ directly corresponds to the vector-2 norm of $\ket{O\rangle}$
\begin{eqnarray}
\|O\|_F^2=\Tr\br{O^\dag O}=\braket{\langle O|O\rangle}=\|\ket{O\rangle}\|_2^2.
\end{eqnarray}
In this work, we use the symbol $\ket{\cdot\rangle}$ for un-normalized vectors and $|\cdot\rangle$ for normalized vectors. Unless otherwise specified, in the following, without loss of generality, we will set $O$ to be normalized to facilitate the presentation.

The overlap between two operators $\Tr\br{O_1^\dag O_2}$ can be similarly translated
\begin{eqnarray}
\Tr\br{O_1^\dag O_2}=\braket{\langle O_1|O_2\rangle}.
\end{eqnarray}

\item \textbf{Complex conjugate.} For $O=|\phi\rangle\langle \psi|$ with $|\psi\rangle=\sum_i\psi_i|i\rangle$, we have $|O\rangle=|\phi\rangle|\psi^*\rangle$ with $|\psi^*\rangle=\sum_i\psi_i^*|i\rangle$.

\item \textbf{Operation.} A general linear transformation on $O$: $\sum_i A_i O B_i^\dag$ can be re-expressed in the vectorization picture
\begin{eqnarray}
\mathcal{V}\left[\sum_i A_i O B_i^\dag\right]=\sum_i A_i\otimes B_i^* \ket{O\rangle}.
\end{eqnarray}

\item \textbf{Computational basis.} Through vectorization, the matrix element basis will be mapped to the computational basis in the vectorization space
\begin{eqnarray}
|0\rangle\langle 0|&&\xrightarrow[]{\mathcal{V}}|0\rangle|0\rangle\nonumber,\\
|0\rangle\langle 1|&&\xrightarrow[]{\mathcal{V}}|0\rangle|1\rangle\nonumber,\\
|1\rangle\langle 0|&&\xrightarrow[]{\mathcal{V}}|1\rangle|0\rangle\nonumber,\\
|1\rangle\langle 1|&&\xrightarrow[]{\mathcal{V}}|1\rangle|1\rangle\nonumber.
\end{eqnarray}
\item \textbf{Pauli basis.} Through vectorization, the matrix Pauli basis will be mapped to the Bell basis in the vectorization space
\begin{eqnarray}\label{bpc}
\frac{1}{\sqrt{2}}I=\begin{pmatrix}1& 0\\0&1\end{pmatrix}&&\xrightarrow[]{\mathcal{V}}|\Phi^+\rangle=\frac{|0\rangle|0\rangle+|1\rangle|1\rangle}{\sqrt{2}}\nonumber,\\
\frac{1}{\sqrt{2}}X=\begin{pmatrix}0&1\\1&0\end{pmatrix}&&\xrightarrow[]{\mathcal{V}}|\Psi^+\rangle=\frac{|0\rangle|1\rangle+|1\rangle|0\rangle}{\sqrt{2}}\nonumber,\\
\frac{1}{\sqrt{2}}Y=\begin{pmatrix}0&-i\\i&0\end{pmatrix}&&\xrightarrow[]{\mathcal{V}}|\Psi^-\rangle=\frac{-i(|0\rangle|1\rangle-|1\rangle|0\rangle)}{\sqrt{2}}\nonumber,\\
\frac{1}{\sqrt{2}}Z=\begin{pmatrix}1& 0\\0&-1\end{pmatrix}&&\xrightarrow[]{\mathcal{V}}|\Phi^-\rangle=\frac{|0\rangle|0\rangle-|1\rangle|1\rangle}{\sqrt{2}}.
\end{eqnarray}
Note that, here, we change the phase convention of $|\Psi^-\rangle$ whose original definition is $\br{|0\rangle|1\rangle-|1\rangle|0\rangle}/\sqrt{2}$. In the following, we will mainly use $P$ to denote a $n$-qubit Pauli operator and use $|B\rangle$ to denote its corresponding vectorized $2n$-qubit Bell state.
\end{itemize}

\section{Entanglement in vectorization space}

As illustrated in Fig. \ref{fig1} in the main text, we will call the subsystem $ac$ the upper subsystem (US) and the subsystem $bd$ the lower subsystem (LS). And we will call the subsystem $ab$ the row subsystem (RS) and the subsystem $cd$ the column subsystem (CS).
\subsection{Space entanglement (SE)}
Space entanglement (SE) is related to space bipartition. In operator space, $O$ has the unique operator singular value decomposition
\begin{eqnarray}
O=\sum_i s_i O_{u,i}\otimes O_{l,i}.
\end{eqnarray}
Because $O$ is normalized, $\{s_i\}$ satisfy $s_i\geq 0$ and $\sum_i s_i^2=1$, and $O_{u,i}$ and $O_{l,i}$ are orthogonal and normalized matrices satisfying $\Tr\br{O_{u/l,i}^\dag O_{u/l,j}}=\delta_{ij}$. In vectorization picture, this operator Schmidt decomposition of $O$ corresponds to the Schmidt decomposition under $(ac|bd)$
\begin{eqnarray}
|O\rangle=\sum_i s_i |O_{u,i}\rangle\otimes |O_{l,i}\rangle,
\end{eqnarray}
where $|O_{u,i}\rangle$ and $|O_{l,i}\rangle$ are the vectorization of $O_{u,i}$ and $O_{l,i}$ in US and LS respectively. With the Schmidt coefficients $\{s_i\}$, we can define the SE-Rényi entropy 
\begin{definition}[SE-Rényi entropy]
The $\alpha$-order SE-Rényi entropy is defined as
\begin{eqnarray}
H_{SE,\alpha}(O)= \frac{1}{1-\alpha}\log\left(\sum_i s_i^{2\alpha}\right).
\end{eqnarray}
\end{definition}

The benefit of using SE in vectorization space is that it can be applied to arbitrary operators rather than just $O=|\psi\rangle\langle \psi|$.

\subsection{bra-ket entanglement (BKE)\label{BKE}}
bra-ket entanglement (BKE) is related to bra-ket bipartition. Under $(ab|cd)$, we can again write down a unique Schmidt decomposition of $|O\rangle$
\begin{eqnarray}\label{tesch}
|O\rangle=\sum_i \gamma_i |O_{r,i}\rangle\otimes |O_{c,i}\rangle,
\end{eqnarray}
where $\{\gamma_i\}$ satisfy $\gamma_i\geq 0$ and $\sum_i \gamma_i^2=1$, and $|O_{r,i}\rangle$ and $|O_{c,i}\rangle$ are orthogonal states in RS and CS respectively: $\langle O_{r,i} |O_{r,j}\rangle=\langle O_{c,i} |O_{c,j}\rangle=\delta_{ij}$. With the Schmidt coefficients $\{\gamma_i\}$, we can define the BKE-Rényi entropy. 
\begin{definition}[BKE-Rényi entropy]
The $\alpha$-order BKE-Rényi entropy is defined as
\begin{eqnarray}
\label{eq:te_def}
H_{BKE,\alpha}(O)= \frac{1}{1-\alpha}\log\left(\sum_i \gamma_i^{2\alpha}\right).
\end{eqnarray}
\end{definition}

For example, when $O=|i\rangle\langle j|$ is a matrix element basis, $|O\rangle=|i\rangle|j\rangle$ is a product state under bra-ket bipartition with $H_{BKE,\alpha}(|i\rangle|j\rangle)=0$. In contrast, when $O=2^{-n/2}P_i$ is a Pauli basis, $|O\rangle=|B_i\rangle$ is a Bell state with the maximal entanglement $H_{BKE,\alpha}(|B_i\rangle)=n$ under bra-ket bipartition.

In fact, the Schmidt decomposition Eq. \ref{tesch} under $(ab|cd)$ is a re-expression of the singular value decomposition of $O$,
\begin{eqnarray}\label{svd}
O=R\Sigma C,
\end{eqnarray}
where $R$ is a unitary operator whose columns are the orthonormal vectors $\{ |O_{r,i}\rangle \}$, $C$ is a unitary operator whose rows are given by the orthonormal vectors $\{ \langle O_{c,i}^*| \}$, and
\begin{eqnarray}\label{svd1}
\Sigma=\sum_i\gamma_i |i\rangle\langle i|.
\end{eqnarray}
Since $\Sigma$ is diagonal, we can also decompose $\Sigma$ into a linear combination of Pauli operators $Q_i\in\{\pm 1\}\times\{I,Z\}^{\otimes n}$ as
\begin{eqnarray}\label{svd2}
\Sigma=2^{-n/2}\sum_i \lambda_i Q_i,
\end{eqnarray}
with $\lambda_i\geq 0$ and $\sum_i\lambda_i^2=1$. For $\{\lambda_i\}$, we can again define its Rényi entropy, which we call the Fourier-BKE Rényi entropy.
\begin{definition}[Fourier-BKE Rényi entropy]
The $\alpha$-order Fourier-BKE Rényi entropy is defined as
\begin{eqnarray}
H_{FBKE,\alpha}(O)= \frac{1}{1-\alpha}\log\left(\sum_i \lambda_i^{2\alpha}\right).
\end{eqnarray}
\end{definition}

Since $\{\gamma_i\}$ and $\{\lambda_i\}$ are related by the Walsh-Hadamard transform \cite{o2014analysis}, we can prove a BKE Fourier-BKE uncertainty principle.
\begin{lemma}\label{TEFTE}[Theorem \ref{mainthe3} in main text]
The BKE Fourier-BKE uncertainty principle says
\begin{eqnarray}
H_{BKE,\alpha}(O)+H_{FBKE,\beta}(O)\geq n,\quad\forall~ \alpha,\beta\geq \frac{1}{2}\quad\text{and}\quad \frac{1}{\alpha}+\frac{1}{\beta}=2.
\end{eqnarray}
\end{lemma}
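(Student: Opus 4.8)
The plan is to recognize the desired statement as a direct instance of the entropic uncertainty principle (Lemma~\ref{entro}) applied to a single pure state in a single register, rather than in the full $2^{2n}$-dimensional vectorization space. The key observation is that the Fourier TE entropy is defined via the Walsh--Hadamard transform of the Schmidt spectrum $\{\gamma_i\}$ of $|O\rangle$ under the time bipartition. Concretely, the diagonal matrix $\Sigma=\sum_i\gamma_i|i\rangle\langle i|$ from Eq.~\eqref{svd1} contains all the relevant data: its diagonal entries $\{\gamma_i\}$ are the amplitudes in the computational basis of the $n$-qubit vector $|\sigma\rangle:=\sum_i\gamma_i|i\rangle$, and the Walsh--Hadamard expansion Eq.~\eqref{svd2} writes $\Sigma = 2^{-n/2}\sum_i\lambda_i Q_i$, i.e. $\{\lambda_i\}$ are (up to signs absorbed into the $Q_i\in\{\pm1\}\times\{I,Z\}^{\otimes n}$) the amplitudes of $|\sigma\rangle$ in the Hadamard-rotated basis $H^{\otimes n}|i\rangle$. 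So $\{\gamma_i^2\}$ and $\{\lambda_i^2\}$ are the two measurement distributions of the single normalized state $|\sigma\rangle$ in the computational basis and the Hadamard basis respectively.

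First I would set up this dictionary carefully: show $\sum_i\gamma_i^2=1$ (already given) so $|\sigma\rangle$ is a legitimate pure state, and verify that the coefficients $\lambda_i$ obtained by projecting $\Sigma$ onto the normalized diagonal Pauli basis $\{2^{-n/2}Q_i\}$ are exactly $\lambda_i = |\langle i| H^{\otimes n}|\sigma\rangle|$ up to the sign choices that make all $\lambda_i\ge 0$ — this uses the fact that the diagonal Pauli operators $\{I,Z\}^{\otimes n}$, conjugated appropriately / read through the Hadamard transform, implement the Walsh--Hadamard change of basis, and that $\{Q_i\}$ ranges over $\{\pm1\}\times\{I,Z\}^{\otimes n}$ precisely to allow nonnegative $\lambda_i$. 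Then $H_{TE,\alpha}(|O\rangle) = H_\alpha(\{\gamma_i^2\})$ is the $\alpha$-Rényi entropy of the computational-basis distribution of $|\sigma\rangle$, and $H_{FTE,\beta}(|O\rangle) = H_\beta(\{\lambda_i^2\})$ is the $\beta$-Rényi entropy of its Hadamard-basis distribution.

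Second, I would invoke Lemma~\ref{entro} with $\rho = |\sigma\rangle\langle\sigma|$, the two orthonormal bases being the computational basis $\{|i\rangle\}$ and the Hadamard basis $\{H^{\otimes n}|i\rangle\}$, and $\alpha,\beta\ge 1/2$ with $1/\alpha+1/\beta=2$. The overlap bound is $\max_{i,j}|\langle i| H^{\otimes n}|j\rangle|^2 = \max_{i,j} 2^{-n} = 2^{-n}$, so $\log\bigl(1/\max_{i,j}|\langle i|H^{\otimes n}|j\rangle|^2\bigr) = \log(2^n) = n$. This immediately yields $H_{TE,\alpha}(|O\rangle) + H_{FTE,\beta}(|O\rangle) \ge n$, which is the claim.

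The main obstacle is not any deep inequality — the uncertainty principle is black-boxed — but the bookkeeping in the first step: making precise that $\{\lambda_i\}$ really is the Hadamard-basis amplitude distribution of $|\sigma\rangle$, including correctly handling the $\{\pm1\}$ signs in $Q_i$ (so that the Rényi entropy, which only sees $\lambda_i^2$, is unaffected) and confirming the mutually unbiased overlap $2^{-n}$ for $n$-qubit computational vs. Hadamard bases. I would also remark that the companion inequality $H_{CBC,\alpha}(|O\rangle)+H_{BBC,\beta}(|O\rangle)\ge n$ in Theorem~\ref{mainthe3} follows by the identical argument applied instead to the $2n$-qubit state $|O\rangle$ itself, using that the computational basis and the Bell basis of $2n$ qubits are mutually unbiased with overlap $2^{-n}$ (since each Bell-basis vector on the $(ab|cd)$-paired qubits has flat computational-basis support of size $2^n$), deferring that to SM~\ref{the1proof} as the excerpt indicates.
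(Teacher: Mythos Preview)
Your proposal is correct and follows essentially the same approach as the paper: both reduce to the entropic uncertainty relation (Lemma~\ref{entro}) applied to two mutually unbiased bases with overlap $2^{-n/2}$. The only cosmetic difference is that the paper vectorizes $\Sigma$ to the $2n$-qubit state $|\Sigma\rangle=\sum_i\gamma_i|i\rangle|i\rangle=\sum_i\lambda_i|Q_i\rangle$ and invokes Lemma~\ref{entro} for the diagonal computational basis $\{|i\rangle|i\rangle\}$ versus the Bell basis $\{|Q_i\rangle\}$, whereas you work directly with the $n$-qubit state $|\sigma\rangle=\sum_i\gamma_i|i\rangle$ and the computational versus Hadamard bases---the two pictures being related by the obvious isometry $|i\rangle\mapsto|i\rangle|i\rangle$.
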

\begin{proof}
Since we have
\begin{eqnarray}
|\Sigma\rangle=\sum_i\gamma_i |i\rangle |i\rangle=\sum_i \lambda_i |Q_i\rangle,
\end{eqnarray}
and the maximal overlap between a 2n-qubit computational basis $|i\rangle |i\rangle$ and a Bell basis $|Q_i\rangle$ is $2^{-n/2}$, as shown in Eq. \ref{bpc}. Therefore, combined with Lemma \ref{entro}, Lemma \ref{TEFTE} is proved.
\end{proof}

An important property of BKE is that it remains unchanged under unitary transformation.
\begin{lemma}
For any unitary operator $U$, we have
\begin{eqnarray}
H_{BKE,\alpha}(O)&&=H_{BKE,\alpha}\left(UOU^\dagger\right),\\
H_{FBKE,\alpha}(O)&&=H_{FBKE,\alpha}\left(UOU^\dagger\right)
\end{eqnarray}
\end{lemma}
\begin{proof}
Since we have
$$\mathcal{V}\left[UOU^\dag\right]=U\otimes U^*|O\rangle,$$
with $U$ lives in RS and $U^*$ lives in CS, therefore, $U\otimes U^*$ is a product operator in bra-ket bipartition and won't change $\{\gamma_i\}$ and $\{\lambda_i\}$.
\end{proof}

\section{Coherence in vectorization space}
Quantum coherence is a basis-dependent concept. Here, we consider coherence in vectorization space under two bases: computational basis and Bell basis.

\subsection{Computational basis coherence (CBC)\label{cbcco}}
The operator $O$ can be decomposed in matrix element basis and can be further mapped into the computational basis in vectorization space
\begin{eqnarray}
O=\sum_{ij}o_{ij}|i\rangle\langle j|\xrightarrow[]{\mathcal{V}}|O\rangle=\sum_{ij}o_{ij}|i\rangle|j\rangle,
\end{eqnarray}
which leads to the definition of CBC-Rényi entropy.
\begin{definition}[CBC-Rényi entropy]
The $\alpha$-order CBC-Rényi entropy is defined as
\begin{eqnarray}
H_{\mathrm{CBC},\alpha}(O)= \frac{1}{1-\alpha}\log\left(\sum_{ij} |o_{ij}|^{2\alpha}\right).
\end{eqnarray}
\end{definition}
\noindent For example, when $O=|i\rangle\langle j|$ is a matrix element basis, we have $H_{\mathrm{CBC},\alpha}(|i\rangle\!\langle j|)=0$. In contrast, when $O=2^{-n/2}P_i$ is a Pauli basis, we have $H_{\mathrm{CBC},\alpha}(2^{-n/2}P_i) = (1-\alpha)^{-1}\log\left[2^{-\alpha n}\cdot 2^n\right]=n$.

Since CBC characterizes the distribution on the computational basis in vectorization space, when $U$ is constructed from the standard gate set $\{H,S,T,CX\}$, the Hadamard gate is the only gate that can change coherence since the other gates are only diagonal phase gates or permutation gates for computational basis. The effect of Hadamard gates on CBC can be summarized in the following Lemma.
\begin{lemma}\label{cbch}
Let $N_H$ be the number of Hadamard gates $H$ in $U$, we have
\begin{eqnarray}
H_{\mathrm{CBC},\alpha}(U^\dagger O U)\leq \min\{2N_H+H_{CBC,0}(O),2n\}.
\end{eqnarray}
\end{lemma}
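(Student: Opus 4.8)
The plan is to reduce the whole statement to counting how many computational-basis amplitudes $|O\rangle$ has, and then to invoke only the monotonicity of Rényi entropy in its order. I would first record the two elementary ingredients. (i) With the convention $0^0=0$, $H_{CBC,0}(|\Phi\rangle)$ equals $\log$ of $|\mathcal{S}(|\Phi\rangle)|$, where for $|\Phi\rangle=\sum_{ij}\phi_{ij}|i\rangle|j\rangle$ in the $2n$-qubit vectorization space we write $\mathcal{S}(|\Phi\rangle)=\{(i,j):\phi_{ij}\neq 0\}$ for its computational-basis support. (ii) By Eq.~\ref{renyiorder}, $H_{CBC,\alpha}(|\Phi\rangle)\le H_{CBC,0}(|\Phi\rangle)\le \log 2^{2n}=2n$ for every $\alpha\ge 0$. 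Ingredient (ii) already supplies the $2n$ branch of the minimum, so the real content of the lemma is the single inequality $H_{CBC,0}(U\otimes U^*|O\rangle)\le 2N_H+H_{CBC,0}(|O\rangle)$, after which (ii) is applied once more to pass from $H_{CBC,0}$ to $H_{CBC,\alpha}$.

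To get that inequality I would use the circuit structure of $U$. Writing $U$ as an alternating product of $N_H$ single-qubit Hadamards and gates from $\mathcal{F}_{CBC}$ — diagonal phase gates and computational-basis permutations such as $S$, $T$, $CX$, $CCX$ — the conjugate $U^*$ has exactly the same form, since $H$ is real, $S$ and $T$ conjugate to diagonal phase gates, and $CX,CCX$ conjugate to permutations; in particular $U^*$ again contains exactly $N_H$ Hadamards. Hence, regarded as a circuit on the $2n$ vectorization qubits, $U\otimes U^*$ is a product of $2N_H$ single-qubit Hadamards interleaved with $2n$-qubit diagonal/permutation unitaries.

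Next I would propagate the support set through this $2n$-qubit circuit gate by gate. A diagonal phase gate multiplies each amplitude by a unit-modulus phase and a permutation gate relabels basis vectors bijectively, so both leave $|\mathcal{S}|$ unchanged. A Hadamard on a single qubit $q$ maps each basis vector to a combination of itself and the same vector with bit $q$ flipped, so it enlarges the support to a subset of $\mathcal{S}\cup(\mathcal{S}\oplus e_q)$ and therefore at most doubles $|\mathcal{S}|$. Composing along the circuit gives $|\mathcal{S}(U\otimes U^*|O\rangle)|\le 2^{2N_H}\,|\mathcal{S}(|O\rangle)|$, which is exactly $H_{CBC,0}(U\otimes U^*|O\rangle)\le 2N_H+H_{CBC,0}(|O\rangle)$ by ingredient (i). Feeding this into $H_{CBC,\alpha}\le H_{CBC,0}$ and combining with the universal bound $H_{CBC,\alpha}\le 2n$ from ingredient (ii) produces the claimed minimum.

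There is no genuinely hard step; the only points requiring care are bookkeeping ones. First, one must check that complex conjugation keeps every non-Hadamard gate inside $\mathcal{F}_{CBC}$ and leaves the Hadamard count unchanged, so that the correct prefactor is $2^{2N_H}$ rather than $2^{N_H}$. Second, the $\alpha=0,1,\infty$ edge cases of the Rényi-order inequality are needed but are already contained in Eq.~\ref{renyiorder}. Finally, the argument never uses the particular generating set $\{H,S,T,CX\}$: it goes through verbatim for any $U$ all of whose non-Hadamard gates are computational-basis diagonal or permutation unitaries, which is precisely the defining property of $\mathcal{F}_{CBC}$.
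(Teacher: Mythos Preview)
Your proof is correct and follows essentially the same approach as the paper: bound $H_{CBC,\alpha}$ by $H_{CBC,0}$ via Eq.~\ref{renyiorder}, then track the computational-basis support through the circuit, noting that only Hadamards enlarge it. The only cosmetic difference is that the paper counts $N_H$ Hadamards in $U$, each acting as $H\otimes H$ on two qubits of the vectorization and hence multiplying the support by at most $4$, whereas you count $2N_H$ single-qubit Hadamards in $U\otimes U^*$, each doubling the support; both give the same factor $4^{N_H}=2^{2N_H}$.
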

\begin{proof}
According to Eq. \ref{renyiorder}, we have 
\begin{eqnarray}
H_{\mathrm{CBC},\alpha}(U^\dagger O U)\leq H_{CBC,0}(U^\dagger O U).
\end{eqnarray}
$H_{CBC,0}$ denotes the CBC rank: number of computational basis $|i\rangle|j\rangle$ with non-zero amplitudes for operators. The action of a single Hadamard gate can split $|i\rangle|j\rangle$ into a linear combination of 4 $|i\rangle|j\rangle$, making the number of computational basis $|i\rangle|j\rangle$ with non-zero amplitudes increase by at most four times. Therefore, for $N_H$ Hadamard gates, we must have
\begin{eqnarray}
H_{CBC,0}(U^\dagger O U)
&&\leq N_H \times \log 4+H_{CBC,0}(O)\\
&&= 2N_H+H_{CBC,0}(O).
\end{eqnarray}
Also, since there are $2^{2n}$ computational basis in vectorization space, another upper bound of $H_{CBC,0}(U^\dagger O U)$ is $2n$. 
\end{proof}

\subsection{Bell basis coherence (BBC)\label{bbcco}}
The operator $O$ can be decomposed in another basis: the Pauli basis, and can be further mapped into the Bell basis in vectorization space
\begin{eqnarray}\label{eq:def_BBdecomp}
O=2^{-n/2}\sum_{i}b_iP_i\xrightarrow[]{\mathcal{V}}|O\rangle=\sum_{i}b_i|B_i\rangle,
\end{eqnarray}
which leads to the definition of BBC-Rényi entropy.
\begin{definition}[BBC-Rényi entropy]
The $\alpha$-order BBC-Rényi entropy is defined as
\begin{eqnarray}
H_{\mathrm{BBC},\alpha}(O)= \frac{1}{1-\alpha}\log\left(\sum_i |b_i|^{2\alpha}\right).
\end{eqnarray}
\end{definition}
\noindent For example, when $O=|i\rangle\langle j|$ is a matrix element basis, we have $H_{\mathrm{BBC},\alpha}(|i\rangle|j\rangle)=n$. In contrast, when $O=2^{-n/2}P_i$ is a Pauli basis, we have $H_{\mathrm{BBC},\alpha}(|B_i\rangle)=0$.

BBC has strong connections with magic. We can consider the changing of BBC between $|O\rangle$ and $U\otimes U^*|O\rangle$. Since BBC characterizes the distribution of $O$ on the Pauli basis, when $U$ is constructed from the standard gate set $\{H,S,T,CX\}$, the $T$ gate is the only gate that can change $H_{\mathrm{BBC},\alpha}(O)$ since the other gates are all Clifford which are permutation gates on Pauli basis. The effect of $T$ on BBC can be summarized in the following lemma.
\begin{lemma}\label{bbct}
Let $N_T$ be the number of $T$ gates in $U$, we have
\begin{eqnarray}
H_{\mathrm{BBC},\alpha}(U^\dagger O U)\leq \min\{N_T+H_{BBC,0}(O), 2n\}.
\end{eqnarray}
\end{lemma}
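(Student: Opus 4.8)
\textbf{Proof proposal for Lemma~\ref{bbct}.}

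The plan is to mirror the structure of the proof of Lemma~\ref{cbch}, working in the Bell (Pauli) basis instead of the computational basis. First I would invoke the ordering of Rényi entropies, Eq.~\eqref{renyiorder}, to reduce the claim to a statement about the BBC rank: $H_{BBC,\alpha}(U\otimes U^*|O\rangle)\leq H_{BBC,0}(U\otimes U^*|O\rangle)$, where $H_{BBC,0}$ counts the number of Bell basis vectors $|B_i\rangle$ (equivalently, Pauli operators $P_i$) appearing with nonzero amplitude in the Pauli decomposition. So it suffices to bound this count by $N_T + H_{BBC,0}(|O\rangle)$ and also by $2n$.

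Next I would analyze how each gate in $U$ acts on the Pauli support. Decompose $U$ into a sequence of one- and two-qubit gates from $\{H,S,T,CX\}$. Conjugation $P\mapsto gPg^\dagger$ for $g$ Clifford (i.e.\ $g\in\{H,S,CX\}$) is a permutation on the Pauli group up to phase, so it leaves the number of nonzero Pauli amplitudes unchanged. The only non-Clifford generator is $T$. Using $TXT^\dagger=\tfrac{1}{\sqrt2}(X+Y)$, $TYT^\dagger=\tfrac{1}{\sqrt2}(-X+Y)$, $TZT^\dagger=Z$, $TIT^\dagger=I$ (as already noted in the main text), conjugation by a single $T$ on a given qubit sends each Pauli term to a linear combination of at most $2$ Pauli terms — and crucially these act within the two-dimensional subspace spanned by $\{X,Y\}$ on that qubit (or fix the term entirely if the qubit carries $I$ or $Z$). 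Hence a single $T$ at most doubles the number of nonzero Pauli terms, contributing at most $\log 2 = 1$ to the BBC rank. Iterating over all $N_T$ occurrences of $T$ gives $H_{BBC,0}(U\otimes U^*|O\rangle)\leq N_T + H_{BBC,0}(|O\rangle)$. The second bound is immediate: there are only $4^n = 2^{2n}$ Pauli operators on $n$ qubits, so the rank is at most $2n$; taking the minimum of the two bounds and combining with the Rényi ordering completes the proof.

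One subtlety I would be careful about — and I expect it to be the main (though minor) obstacle — is the bookkeeping of the conjugation $U\otimes U^*$ acting on $|O\rangle$ rather than $U$ acting on a state: I should confirm that under vectorization the relevant action on the Bell basis is exactly $P_i\mapsto UP_iU^\dagger$ (up to the complex-conjugate convention fixed in Section~\ref{bpcc}), so that counting Pauli terms of $UOU^\dagger$ is the same as counting Bell basis components of $U\otimes U^*|O\rangle$. This is essentially the content of the ``Operation'' and ``Pauli basis'' bullets in Section~\ref{bpcc}, and once that identification is made the gate-by-gate argument goes through verbatim. A second point worth a sentence is that when $T$ acts on a qubit where the current Pauli string has $I$ or $Z$, the term is left invariant, so ``at most doubles'' is the correct worst case and the factor-of-$2$ (rather than $4$, as for $H$ in the computational basis) is genuine.
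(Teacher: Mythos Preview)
Your proposal is correct and follows essentially the same route as the paper's own proof: reduce to the rank $H_{BBC,0}$ via the Rényi ordering, observe that Clifford gates permute Pauli/Bell basis elements while a single $T$ at most doubles the count, and cap by the total number $2^{2n}$ of Bell basis states. Your extra remarks on the vectorization identification $U\otimes U^*|B_i\rangle=\mathcal{V}[UP_iU^\dagger]$ and on the genuine factor-of-$2$ (versus $4$ for $H$) are sound elaborations that the paper leaves implicit.
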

\begin{proof}
According to Eq. \ref{renyiorder}, we have 
\begin{eqnarray}
H_{\mathrm{BBC},\alpha}(U^\dagger O U)\leq H_{BBC,0}(U^\dagger O U).
\end{eqnarray}
$H_{BBC,0}$ denotes the BBC rank: number of Bell basis $|B_i\rangle$ with non-zero amplitudes for operators. The action of a single $T$ gate can at most split $|B_i\rangle$ into a linear combination of 2 $|B_i\rangle$, making the number of Bell basis $|B_i\rangle$ with non-zero amplitudes increase by at most two times. Therefore, for $N_T$ the number of $T$ gates, we must have
\begin{eqnarray}
H_{BBC,0}(U^\dagger O U)\leq N_T+H_{BBC,0}(O).
\end{eqnarray}
Also, since there are $2^{2n}$ Bell basis in vectorization space, another upper bound of $H_{BBC,0}(U^\dagger O U)$ is $2n$. 
\end{proof}

\section{BKE diagnoses SE}

\subsection{CBC and BBC bound SE\label{the1proof}}
Through the above definitions, we can now give a formal theorem to relate SE with CBC and BBC.
\begin{lemma}\label{lemma:basis_upper_bound}
    Given a $n$ qubit system with basis $\{B_1,\cdots,B_{n^2}\}$ such that $\text{tr}(B_j B_k^\dagger) = \delta_{jk}$ and $B_j = O_j^{(1)}\otimes \cdots\otimes O_j^{(n)}$ are tensor product operators. Then the decomposition $O = \sum_{j=1}^{n^2}c_j B_j$ satisfied $H_{\alpha}(\{c_i^2\}) \geq H_{SE, \alpha}(O)$
\end{lemma}
\begin{proof}
    Rényi entropy are Schur concave functions~\cite{ash2012information}, therefore, it is sufficient to show the majorization relations$\{s_i^2\}\succeq\{c_i^2\}$.
The vectorization of basis $B_j$ forms an orthogonal basis set $\{|B_j\rangle\}$.
Because $B_j$ are tensor product operators and normalized, it is always possible to decompose its vectorized state as $|B_{\alpha,\beta}\rangle = |\alpha\rangle_u|\beta\rangle_l$ on the basis set $\{|\alpha\rangle_u|\beta\rangle_l\}$, where  $\{|\alpha\rangle_u\}$ and $\{|\beta\rangle_l\}$ are orthogonal basis sets in US and LS. 

Then we can decompose $|O\rangle$ on the basis set $\{|\alpha\rangle_u|\beta\rangle_l\}$
\begin{eqnarray}
|O\rangle=\sum_{\alpha\beta}c_{\alpha\beta}|\alpha\rangle_u|\beta\rangle_l,
\end{eqnarray}
with $\sum_{\alpha\beta} |c_{\alpha\beta}|^2=1$. We can re-express $|O\rangle$ as
\begin{eqnarray}
|O\rangle=\sum_{\alpha} \sqrt{\sum_{\beta_1} |c_{\alpha\beta_1}|^2}|\alpha\rangle_u\left(\frac{\sum_\beta c_{\alpha\beta}|\beta\rangle_l}{\sqrt{\sum_{\beta_2} |c_{\alpha\beta_2}|^2}}\right),
\end{eqnarray}
with $\sum_\alpha\left(\sqrt{\sum_{\beta} |c_{\alpha\beta}|^2}\right)^2=1$. It is obvious to see that
\begin{eqnarray}
\left\{\sum_{\beta} |c_{\alpha\beta}|^2\right\}\succeq \{|c_{\alpha\beta}|^2\}.
\end{eqnarray}
Since we also have [Corollary 4 in Ref \cite{Nielsen_2000}]
\begin{eqnarray}
\{s_i^2\}\succeq \left\{\sum_{\beta} |c_{\alpha\beta}|^2\right\},
\end{eqnarray}
therefore,
\begin{eqnarray}\label{major}
H_{\alpha}\left(\{|c_{\alpha\beta}|^2\}\right)\geq H_{SE,\alpha}(O).
\end{eqnarray}
\end{proof}
Based on Lemma~\ref{lemma:basis_upper_bound}, we prove the theorem below.
\begin{theorem}\label{the1}[Theorem \ref{mainthe1} in main text]
$G_{zero,\alpha}(O)$ has the upper bound
\begin{eqnarray}
\label{eq:zero_eq}
G_{zero,\alpha}(O)\leq\min\left\{n,H_{\mathrm{CBC},\alpha}(O),H_{\mathrm{BBC},\alpha}(O)\right\}-H_{SE,\alpha}(O).
\end{eqnarray}
$G_{\mathrm{CBC},\alpha}(O)$ has the upper bound
\begin{eqnarray}
\label{eq:CBC_eq}
G_{\mathrm{CBC},\alpha}(O)\leq\min\left\{n,H_{\mathrm{CBC},\alpha}(O)\right\}-H_{SE,\alpha}(O).
\end{eqnarray}
$G_{\mathrm{BBC},\alpha}(O)$ has the upper bound
\begin{eqnarray}
\label{eq:BBC_eq}
G_{\mathrm{BBC},\alpha}(O)\leq \min\left\{H_{\mathrm{BBC},\alpha}(O),n\right\}-H_{SE,\alpha}(O).
\end{eqnarray}

\end{theorem}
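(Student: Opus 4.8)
The plan is to reduce all three inequalities to two elementary, state-level facts and then invoke free-circuit invariance. First I would observe that \eqref{eq:zero_eq} is a consequence of \eqref{eq:CBC_eq} and \eqref{eq:BBC_eq}: since $\mathcal{F}_{zero}=\mathcal{F}_{CBC}\cap\mathcal{F}_{BBC}$ is contained in both families, $G_{zero,\alpha}(|O\rangle)\le\min\{G_{CBC,\alpha}(|O\rangle),G_{BBC,\alpha}(|O\rangle)\}$, and taking the minimum of the right-hand sides of \eqref{eq:CBC_eq} and \eqref{eq:BBC_eq} reproduces exactly the right-hand side of \eqref{eq:zero_eq}. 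So it suffices to prove \eqref{eq:CBC_eq} and \eqref{eq:BBC_eq}, and since their proofs are mirror images I would present the CBC case in detail.

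For \eqref{eq:CBC_eq} the point is a uniform bound holding for \emph{every} $U\in\mathcal{F}_{CBC}$:
\[
H_{SE,\alpha}(U\otimes U^*|O\rangle)\le\min\bigl\{n,\,H_{CBC,\alpha}(U\otimes U^*|O\rangle)\bigr\}=\min\bigl\{n,\,H_{CBC,\alpha}(|O\rangle)\bigr\},
\]
where the equality is the defining property of $\mathcal{F}_{CBC}$. Taking $\sup_{U\in\mathcal{F}_{CBC}}$ and subtracting $H_{SE,\alpha}(|O\rangle)$ then gives \eqref{eq:CBC_eq}; \eqref{eq:BBC_eq} is identical with BBC and $\mathcal{F}_{BBC}$. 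Thus everything comes down to two inequalities valid for an arbitrary $2n$-qubit state $|\psi\rangle$ (applied to $|\psi\rangle=U\otimes U^*|O\rangle$): (i) $H_{SE,\alpha}(|\psi\rangle)\le n$, and (ii) $H_{SE,\alpha}(|\psi\rangle)\le H_{CBC,\alpha}(|\psi\rangle)$ and $H_{SE,\alpha}(|\psi\rangle)\le H_{BBC,\alpha}(|\psi\rangle)$.

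For (i), the Schmidt rank of $|\psi\rangle$ across the space bipartition $(ac|bd)$ equals the operator Schmidt rank of the corresponding operator and so is at most $2^{n}$, whence $H_{SE,\alpha}\le H_{SE,0}=\log(\text{Schmidt rank})\le n$ by Eq.~\eqref{renyiorder}. For (ii) I would use that both the $2n$-qubit computational basis $\{|i\rangle|j\rangle\}$ and the Bell basis $\{|B_i\rangle\}$ are \emph{product} bases with respect to the space bipartition: a computational basis vector factorizes by regrouping its bits, and a Bell basis vector factorizes as $|B^{(u)}_{i_1}\rangle\otimes|B^{(l)}_{i_2}\rangle$ because the underlying Pauli tensor-factorizes across the spatial cut of $O$'s qubits (cf. Eq.~\eqref{bpc}). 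Given any expansion $|\psi\rangle=\sum_{p,q}c_{pq}|p\rangle_A|q\rangle_B$ in a product basis across a bipartition $A|B$, the reduced operator $\rho_A=CC^\dagger$ with $C=(c_{pq})$ has eigenvalues $\{s_i^2\}$ and diagonal $\{\sum_q|c_{pq}|^2\}_p$; Schur--Horn gives $\{s_i^2\}\succ\{\sum_q|c_{pq}|^2\}_p$, and merging bins of a nonnegative vector by summation is majorizing, so $\{\sum_q|c_{pq}|^2\}_p\succ\{|c_{pq}|^2\}_{p,q}$ after zero-padding; hence $\{s_i^2\}\succ\{|c_{pq}|^2\}_{p,q}$, and Schur-concavity of the Rényi entropy for every $\alpha\ge0$ yields $H_\alpha(\{s_i^2\})\le H_\alpha(\{|c_{pq}|^2\})$. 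Specializing the product basis to the computational basis proves $H_{SE,\alpha}\le H_{CBC,\alpha}$, and to the Bell basis proves $H_{SE,\alpha}\le H_{BBC,\alpha}$.

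The step I expect to need the most care is (ii): one must recognize that the Bell basis — not a product basis in any naive sense on the $2n$ vectorized qubits — nonetheless factorizes across the space cut because that cut descends from a spatial partition of the $n$ physical qubits on which Paulis act as tensor products; once this is granted, the majorization chain (Schur--Horn together with the elementary fact that coarse-graining a nonnegative probability vector produces a majorizing vector) and the Schur-concavity of all Rényi orders are standard. The remaining pieces — the rank bound (i), the reduction $\mathcal{F}_{zero}\subseteq\mathcal{F}_{CBC}\cap\mathcal{F}_{BBC}$, and passing from the uniform state-level bounds to the supremum defining $G_{\mathcal{F},\alpha}$ — are bookkeeping.
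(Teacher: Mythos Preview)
Your proposal is correct and follows essentially the same route as the paper: reduce everything to the state-level inequalities $H_{SE,\alpha}\le H_{CBC,\alpha}$ and $H_{SE,\alpha}\le H_{BBC,\alpha}$, observe that both the computational and Bell bases are product across the space cut $(ac|bd)$, and establish the majorization $\{s_i^2\}\succ\{|c_{pq}|^2\}$ via the two-step chain (Schmidt eigenvalues majorize the diagonal of the reduced state, and marginals majorize the joint distribution), then invoke Schur-concavity of the R\'enyi entropy. The paper phrases the eigenvalue--diagonal step by citing Nielsen's majorization notes rather than Schur--Horn, and it orders the reductions slightly differently (claiming \eqref{eq:CBC_eq}--\eqref{eq:BBC_eq} as ``corollaries'' of \eqref{eq:zero_eq} rather than your reverse implication), but the substance is the same.
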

\begin{proof}
Because Eq. \ref{eq:CBC_eq}-\ref{eq:BBC_eq} are just corollaries of Eq. \ref{eq:zero_eq},  it is sufficient to prove the first equation.
By the definition of $G_{\mathcal{F}, \alpha}(O)$, it is sufficient to show that $\forall U \in \mathcal{F}_{\mathrm{zero}}$,
\begin{align}
 H_{SE,\alpha}(UO U^\dagger ) &\leq  n\\
    H_{SE,\alpha}(UO U^\dagger ) &\leq  H_{\mathrm{CBC},\alpha}(O)\\
       H_{SE,\alpha}(UO U^\dagger ) &\leq  H_{\mathrm{BBC},\alpha}(O)
\end{align}
Since we have
\begin{eqnarray}
H_{SE,\alpha}(UO U^\dagger )\leq H_{SE,0}(UO U^\dagger )\leq n,
\end{eqnarray}
the first identity is proved. By the definition of $\mathcal{F}_{\mathrm{zero}}$, we note that
\begin{align}
    H_{\mathrm{CBC},\alpha}(O) =  H_{\mathrm{CBC},\alpha}(UO U^\dagger ),\\
    H_{\mathrm{BBC},\alpha}(O) =  H_{\mathrm{BBC},\alpha}(UO U^\dagger ).
\end{align}
Thus, it is sufficient to prove that $\forall \ket{O}$,
\begin{eqnarray}
&&H_{\mathrm{CBC},\alpha}(O)\geq H_{SE,\alpha}(O),\label{eq:UB_SE1}\\
&&H_{\mathrm{BBC},\alpha}(O)\geq H_{SE,\alpha}(O).\label{eq:UB_SE2}
\end{eqnarray}
Since both the computational basis $\ket{i}\!\bra{j}$ and Pauli basis $P_{j}$ are both orthonormal tensor product bases, based on Lemma~\ref{lemma:basis_upper_bound}, we can safely state that 
\begin{eqnarray}
H_{\mathrm{CBC},\alpha}(O) &= H_{\alpha}\left(\{|c_{\alpha\beta}|^2\}\right)\geq H_{SE,\alpha}(O),\\
H_{\mathrm{BBC},\alpha}(O)&= H_{\alpha}\left(\{|b_{j}|^2\}\right)\geq H_{SE,\alpha}(O).
\end{eqnarray}
\end{proof}
Therefore, when no additional CBC and BBC resources ($H$ and $T$) are introduced, $G_{zero,\alpha}(O)$ is governed by the existing CBC and BBC in $|O\rangle$. To generate higher entanglement, additional CBC and BBC resources are needed. Since CBC and BBC are about projecting $|O\rangle$ on two different bases, following Lemma \ref{entro}, we can summarize a CBC-BBC uncertainty principle.
\begin{theorem}\label{the2}[Theorem \ref{mainthe3} in main text]
The CBC-BBC uncertainty principle says
\begin{eqnarray}
H_{\mathrm{CBC},\alpha}(O)+H_{BBC,\beta}(O)\geq n,\quad\forall~ \alpha,\beta\geq \frac{1}{2}\quad\text{and}\quad \frac{1}{\alpha}+\frac{1}{\beta}=2.
\end{eqnarray}
\end{theorem}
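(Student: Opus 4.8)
The plan is to reduce the statement directly to the entropic uncertainty principle of Lemma \ref{entro}. First I would observe that both the CBC-Rényi entropy and the BBC-Rényi entropy are exactly Rényi entropies of measurement probability distributions of the fixed $2n$-qubit pure state $|O\rangle$ in two orthonormal bases of $\mathbb{C}^{2^{2n}}$: the computational basis $\{|i\rangle|j\rangle\}$ for CBC, and the Bell basis $\{|B_i\rangle\}$ for BBC. Concretely, $H_{CBC,\alpha}(|O\rangle)=H_\alpha(\{|o_{ij}|^2\})$ is the $\alpha$-Rényi entropy of the distribution obtained by measuring $|O\rangle$ in the computational basis, and $H_{BBC,\beta}(|O\rangle)=H_\beta(\{|b_i|^2\})$ is the $\beta$-Rényi entropy of the distribution obtained by measuring $|O\rangle$ in the Bell basis (using the decomposition \eqref{eq:def_BBdecomp}). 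This identification is immediate from the definitions, so this step is essentially bookkeeping.

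Next I would invoke Lemma \ref{entro} with $\rho=|O\rangle\langle O|$, the basis $\{|i_a\rangle\} = \{|i\rangle|j\rangle\}$, and the basis $\{|i_b\rangle\}=\{|B_i\rangle\}$. The lemma gives
\begin{eqnarray}
H_{CBC,\alpha}(|O\rangle)+H_{BBC,\beta}(|O\rangle)\geq \log\left(\frac{1}{\max_{i,j}|\langle i_a|j_b\rangle|^2}\right)
\end{eqnarray}
for $\alpha,\beta\geq 1/2$ with $1/\alpha+1/\beta=2$. The only remaining point is to compute the maximal overlap $c:=\max_{i,j}|\langle i_a|j_b\rangle|^2$ between a $2n$-qubit computational basis state and a $2n$-qubit Bell basis state. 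For a single qubit pair, each Bell state has the form $2^{-1/2}(|00\rangle\pm|11\rangle)$ or $2^{-1/2}(|01\rangle\pm|10\rangle)$ up to a phase (see \eqref{bpc}), so its overlap with any of the four computational basis states has modulus $2^{-1/2}$. Since both bases are product bases across the $n$ qubit pairs, the $2n$-qubit overlaps factorize, giving $|\langle i_a|j_b\rangle|=2^{-n/2}$ for every pair, hence $c=2^{-n}$ and $\log(1/c)=n$. This is exactly the same overlap computation already used in the proof of Lemma \ref{TEFTE}, so it can either be cited from there or repeated in one line.

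I do not anticipate a genuine obstacle here; the theorem is a direct corollary of Lemma \ref{entro} once the overlap is identified. The only mild subtlety worth a sentence is confirming that the phase-convention change for $|\Psi^-\rangle$ noted after \eqref{bpc} does not affect the overlap moduli (it does not, since it only multiplies a basis vector by a global phase), and that the Bell basis is genuinely orthonormal and complete in each qubit-pair factor, so that Lemma \ref{entro} applies with $\rho$ pure. Assembling these observations yields the claimed bound $H_{CBC,\alpha}(|O\rangle)+H_{BBC,\beta}(|O\rangle)\geq n$.
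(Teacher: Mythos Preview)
Your proposal is correct and follows essentially the same approach as the paper's own proof: identify CBC and BBC as Rényi entropies of measurement distributions in the computational and Bell bases, compute the maximal overlap between a $2n$-qubit computational basis state and a Bell basis state as $2^{-n}$ (via the single-pair overlap $2^{-1/2}$ and factorization), and apply Lemma \ref{entro}. Your write-up is somewhat more explicit than the paper's (e.g.\ spelling out the identification with measurement entropies and the irrelevance of the phase convention for $|\Psi^-\rangle$), but the argument is the same.
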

\begin{proof}
The maximal overlap between a 2-qubit computational basis and a Bell basis is $1/\sqrt{2}$, as shown in Eq. \ref{bpc}. Generalizing to $2n$-qubit cases, we have 
\begin{eqnarray}
\max_{i,j,k}\abs{\langle i|\langle j| B_k\rangle}^2=2^{-n},
\end{eqnarray}
which, combined with Lemma \ref{entro}, leads to the CBC-BBC uncertainty principle.
\end{proof}

\subsection{BKE bounds CBC and BBC\label{the2proof}}
We have seen from above that for $O=|0\rangle\langle 0|^{\otimes n}$, CBC is required for generating higher entanglement, while for $O=P$, BBC is required for generating higher entanglement. Therefore, a natural question is what properties of $|O\rangle$ decide which resource is more desired regarding increasing SE? 

An observation is that for $|O\rangle=|0\rangle^{\otimes n}| 0\rangle^{\otimes n}$, it is a product state in bra-ket bipartition $(ab|cd)$, while for $|O\rangle=|B\rangle$, it is a maximally entangled state in bra-ket bipartition $(ab|cd)$. Therefore, BKE of $|O\rangle$ seems to influence its CBC and BBC. Indeed, we prove the following main theorem.
\begin{theorem}\label{the3}[Theorem \ref{mainthe2} in main text]
The CBC and BBC of $|O\rangle$ have the lower bounds
\begin{eqnarray}
H_{\mathrm{CBC},\alpha}(O)&&\geq n-H_{FBKE,1/2}(O),\nonumber\\
H_{\mathrm{BBC},\alpha}(O)&&\geq n-H_{BKE,1/2}(O).
\end{eqnarray}
\end{theorem}
\begin{proof}
We first prove the CBC lower bound. From Eq. \ref{svd}, Eq. \ref{svd1}, and Eq. \ref{svd2}, we have the observation
\begin{eqnarray}\label{obser}
\frac{\abs{\langle i|O|j\rangle}}{2^{-n/2}\left(\sum_l\lambda_l\right)}=
\frac{\left|\sum_k\lambda_k\langle i|R Q_k C|j\rangle\right|}{\sum_l\lambda_l}
\leq\frac{\sum_k\lambda_k \cdot \left|\langle i|R Q_k C|j\rangle\right|}{\sum_l\lambda_l} \leq 1, 
\end{eqnarray}
where in the last inequality, we use the fact that $|\langle i|R Q_k C|j\rangle|\leq 1$ since $\langle i|R Q_k C|j\rangle$ is a quantum amplitude. Therefore, we have
\begin{eqnarray}
|o_{ij}|^{2\alpha}&&\geq 2^{n(1-\alpha)}\left(\sum_l\lambda_l\right)^{2\alpha-2}|o_{ij}|^{2},\quad\forall~ 0\leq\alpha<1,\\
|o_{ij}|^{2\alpha}&&\leq 2^{n(1-\alpha)}\left(\sum_l\lambda_l\right)^{2\alpha-2}|o_{ij}|^{2},\quad\forall~ \alpha> 1,
\end{eqnarray}
through which, for both $0\leq\alpha<1$ and $\alpha> 1$, we can bound $H_{\mathrm{CBC},\alpha}(O)$
\begin{eqnarray}
H_{\mathrm{CBC},\alpha}(O)&&= \frac{1}{1-\alpha}\log\left(\sum_{ij} |o_{ij}|^{2\alpha}\right)\\
&&\geq\frac{1}{1-\alpha}\log\left(\sum_{ij} 2^{n(1-\alpha)}\left(\sum_l\lambda_l\right)^{2\alpha-2}|o_{ij}|^{2}\right)\nonumber\\&&=n-H_{FBKE,1/2}(O)+\frac{1}{1-\alpha}\log\left(\sum_{ij}|o_{ij}|^2\right)\nonumber\\&&=n-H_{FBKE,1/2}(O),
\end{eqnarray}
where in the second line from the last, we use the definition $H_{FBKE,1/2}(O)=H_{1/2}\left(\{\lambda_i^2\}\right) = 2\log\left(\sum_i \lambda_i\right)$ and in the last line, we use the property $\sum_{ij}|o_{ij}|^2=1$. By taking the limit $\alpha\rightarrow1$, Rényi entropy becomes Shannon entropy, and the same result holds.

We now show the proofs for BBC. An observation similar to Eq. \ref{obser} is that
\begin{eqnarray}
\left|\frac{\Tr(OP_i)}{\sum_i\gamma_i}\right|=\left|\Tr\left(\frac{\Sigma}{\sum_i\gamma_i}CP_i R\right)\right|\leq 1
\end{eqnarray}
where the inequality is due to the fact that $\Sigma/\sum_i\gamma_i$ is a legal density matrix. Based on this observation, we have the relations
\begin{eqnarray}
\abs{\Tr(OP_i)}^{2\alpha}&&\geq\left(\sum_j\gamma_j\right)^{2\alpha-2}\left|\Tr(OP_i)\right|^{2}, \quad \forall~ 0\leq\alpha<1,\label{eq:inequlity1}\\
\abs{\Tr(OP_i)}^{2\alpha}&&\leq\left(\sum_j\gamma_j\right)^{2\alpha-2}\left|\Tr(OP_i)\right|^{2},\quad\forall~\alpha> 1.\label{eq:inequlity12}
\end{eqnarray}
By definition~\eqref{eq:def_BBdecomp}, we have $c_i = 2^{-n/2}\text{Tr}(P_iO)$. 
Then for both $0\leq\alpha<1$ and $\alpha> 1$,due to the different signal of $(1 - \alpha)^{-1}$,  $H_{\mathrm{BBC},\alpha}(O)$ can always be lower bounded as
\begin{eqnarray}
H_{\mathrm{BBC},\alpha}(O)&&=\frac{1}{1-\alpha}\log\left(\sum_i |b_i|^{2\alpha}\right)=\frac{1}{1-\alpha}\log\left(2^{-n\alpha}\sum_i |\Tr(P_i O)|^{2\alpha}\right)\nonumber\\&&\geq \frac{1}{1-\alpha}\log\left(2^{-n\alpha}\sum_i \left(\sum_j\gamma_j\right)^{2\alpha-2}|\Tr(OP_i)|^{2}\right)\nonumber\\&&=\frac{1}{1-\alpha}\log\left(2^{-n\alpha} \left(\sum_j\gamma_j\right)^{2\alpha-2}\right)+\frac{1}{1-\alpha}\log\left(\sum_i|\Tr(OP_i)|^{2}\right)\nonumber\\&&=\frac{-n\alpha}{1-\alpha}-H_{BKE,1/2}(O)+\frac{n}{1-\alpha}\nonumber\\&&=n-H_{BKE,1/2}(O),
\end{eqnarray}
where in the second line from the last, we use the definition $H_{BKE,1/2}(O)=H_{1/2}\left(\{\gamma_i^2\}\right) = 2\log\left(\sum_i \gamma_i\right)$ and the property
\begin{eqnarray}
\sum_i|\Tr(OP_i)|^{2}=2^n \text{Tr}\br{OO^\dagger}=2^n.
\end{eqnarray}
By taking the limit $\alpha\rightarrow1$, Rényi entropy becomes Shannon entropy, and the same result holds. 
\end{proof}

In the following box, we summarize the results from Theorem \ref{the2}-\ref{the3} and Lemma \ref{TEFTE}.
\begin{tcolorbox}
\textbf{Summary of relations around CBC, BBC, and BKE.}
$$H_{\mathrm{CBC},\alpha}(O)+H_{BBC,\beta}(O)\geq n,\quad\forall~ \alpha,\beta\geq \frac{1}{2}\quad\text{and}\quad \frac{1}{\alpha}+\frac{1}{\beta}=2,$$
$$H_{\mathrm{CBC},\alpha}(O)\geq n-H_{FBKE,1/2}(O),$$
$$H_{\mathrm{BBC},\alpha}(O)\geq n-H_{BKE,1/2}(O),$$
$$H_{BKE,\alpha}(O)+H_{FBKE,\beta}(O)\geq n,\quad\forall~ \alpha,\beta\geq \frac{1}{2}\quad\text{and}\quad \frac{1}{\alpha}+\frac{1}{\beta}=2.$$
\end{tcolorbox}
\noindent Through these inequalities, we can define the BKE-matching operator.
\begin{definition}[BKE-matching operator]
An operator $O$ is in BKE-matching operator family $\mathcal{T}_{em}$, if and only if
\begin{eqnarray}
H_{\mathrm{CBC},\alpha}(O)+H_{\mathrm{BBC},\alpha}(O)= 2n-\left(H_{BKE,1/2}(O)+H_{FBKE,1/2}(O)\right).
\end{eqnarray}
\end{definition}
\noindent Examples of BKE-matching operators are $O=|i\rangle\langle j|$ or $O=P_i$. In fact, it is easy to find that as long as $O$ is a tensor product of single-qubit operators with each single-qubit operator either a computational basis or a Pauli basis, $O$ is a BKE-matching operator. For a BKE-matching operator, BKE (and Fourier-BKE) exactly tells us its CBC and BBC, and therefore strictly bounds SE
\begin{eqnarray}
G_{zero,\alpha}(O)\leq \min\left\{n-H_{FBKE,1/2}(O),n-H_{BKE,1/2}(O)\right\}-H_{SE,\alpha}(O),\quad O\in\mathcal{T}_{em}.
\end{eqnarray}

A direct implication of the BKE-matching operators is in the following corollary.
\begin{corollary}
A BKE-matching operator $O$ has minimal CBC-BBC uncertainty
\begin{eqnarray}
H_{\mathrm{CBC},\alpha}(O)+H_{\mathrm{BBC},\alpha}(O)=n,\quad O\in \mathcal{T}_{em}.
\end{eqnarray}
\end{corollary}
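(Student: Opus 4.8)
The plan is to show $H_{TE,1/2}(|O\rangle)+H_{FTE,1/2}(|O\rangle)=n$ by bounding it by $n$ from both sides, using the two entropic uncertainty principles already in hand --- Theorem~\ref{the2} (the CBC--BBC relation) and Lemma~\ref{TEFTE} (the TE--Fourier-TE relation) --- and then reading the corollary off the defining identity of $\mathcal{T}_{em}$.

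The first step is to observe that, for $O\in\mathcal{T}_{em}$, the right-hand side of the defining identity $H_{CBC,\alpha}(|O\rangle)+H_{BBC,\alpha}(|O\rangle)=2n-\br{H_{TE,1/2}(|O\rangle)+H_{FTE,1/2}(|O\rangle)}$ is exactly the sum of the two lower bounds of Theorem~\ref{the3}, namely of $n-H_{FTE,1/2}(|O\rangle)$ and $n-H_{TE,1/2}(|O\rangle)$. Since on the left each term is at least the corresponding term on the right while the two totals agree, both bounds in Theorem~\ref{the3} must be saturated individually; in particular, evaluating at order $\alpha=1$ gives $H_{CBC,1}(|O\rangle)=n-H_{FTE,1/2}(|O\rangle)$ and $H_{BBC,1}(|O\rangle)=n-H_{TE,1/2}(|O\rangle)$. (Running the same argument at a generic $\alpha$ shows as a byproduct that the CBC- and BBC-Rényi entropies of a TE-matching $|O\rangle$ are independent of $\alpha$, in line with the single-qubit generators $|i\rangle\!\langle j|$ and $P_i$ having flat-on-support Schmidt and Fourier spectra.)

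The second step pins down $H_{TE,1/2}(|O\rangle)+H_{FTE,1/2}(|O\rangle)$. For the upper bound I would substitute the two saturated identities into Theorem~\ref{the2} at $\alpha=\beta=1$, giving $n\le H_{CBC,1}(|O\rangle)+H_{BBC,1}(|O\rangle)=2n-\br{H_{TE,1/2}(|O\rangle)+H_{FTE,1/2}(|O\rangle)}$, hence $H_{TE,1/2}(|O\rangle)+H_{FTE,1/2}(|O\rangle)\le n$. For the lower bound I would apply Lemma~\ref{TEFTE} at $\alpha=\beta=1$ to get $H_{TE,1}(|O\rangle)+H_{FTE,1}(|O\rangle)\ge n$, and then lift this to half-order via the monotonicity of Rényi entropy in its order (Eq.~\ref{renyiorder}), which yields $H_{TE,1/2}(|O\rangle)\ge H_{TE,1}(|O\rangle)$ and $H_{FTE,1/2}(|O\rangle)\ge H_{FTE,1}(|O\rangle)$. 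Therefore $H_{TE,1/2}(|O\rangle)+H_{FTE,1/2}(|O\rangle)=n$, and plugging this back into the defining identity of $\mathcal{T}_{em}$ gives $H_{CBC,\alpha}(|O\rangle)+H_{BBC,\alpha}(|O\rangle)=2n-n=n$ for all $\alpha$, as claimed.

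I do not expect a genuine analytic obstacle: the corollary is a chain of inequalities all proved earlier. The one point requiring care is the order bookkeeping. Both symmetric uncertainty relations are available only at Rényi order $1$ (the constraint $1/\alpha+1/\beta=2$ forces $\alpha=\beta=1$ in the equal-order case), whereas both the corollary and the definition of $\mathcal{T}_{em}$ carry a generic $\alpha$ together with the half-order entropies $H_{TE,1/2}$ and $H_{FTE,1/2}$; bridging the two requires invoking Eq.~\ref{renyiorder} to move between orders, and reading the defining equation of $\mathcal{T}_{em}$ as holding for all $\alpha$, which is precisely what licenses the $\alpha=1$ evaluation used in the first step.
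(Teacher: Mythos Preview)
Your proposal is correct and follows essentially the same approach as the paper: bound $H_{TE,1/2}(|O\rangle)+H_{FTE,1/2}(|O\rangle)$ from above via the defining identity of $\mathcal{T}_{em}$ combined with the CBC--BBC uncertainty principle at $\alpha=\beta=1$, and from below via the TE--Fourier-TE uncertainty principle together with Rényi monotonicity, then read off the corollary. The only cosmetic differences are that the paper obtains the lower bound by invoking Lemma~\ref{TEFTE} at the pair $(\alpha,\beta)=(1/2,\infty)$ (so monotonicity is needed once, on $H_{FTE}$) rather than at $(1,1)$ (where you need it twice), and that your individual-saturation observation in Step~1, while correct, is not actually required for the argument.
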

\begin{proof}
Since we have
\begin{eqnarray}
H_{\mathrm{CBC},\alpha}(O)+H_{\mathrm{BBC},\alpha}(O)= 2n-\br{H_{BKE,1/2}(O)+H_{FBKE,1/2}(O)}\geq n,
\end{eqnarray}
and 
\begin{eqnarray}
H_{BKE,1/2}(O)+H_{FBKE,1/2}(O)\geq H_{BKE,1/2}(O)+H_{FBKE,\infty}(O)\geq n,
\end{eqnarray}
where the first inequality is from Eq. \ref{renyiorder}, therefore, we can prove
\begin{eqnarray}
H_{BKE,1/2}(O)+H_{FBKE,1/2}(O)&&= n\\
H_{\mathrm{CBC},\alpha}(O)+H_{\mathrm{BBC},\alpha}(O)&&=n.
\end{eqnarray}
\end{proof}
\begin{lemma}\label{ll7}
    Given an BKE-matching operator $O$ with bra-ket entanglement $H_{BKE,1/2} ({O})= r \leq n $. If circuit $U$ composed of   $\{H,S,CNOT,T\}$ gates satisfied $H_{SE,\alpha}(U^\dagger OU) \geq \eta$, then the $T$ gate number within $U$ satisfied $N_T \geq \eta - (n-r)$ and $N_{H} \geq \eta - r$.
\end{lemma}
\begin{proof}
       Combined with the condition below for BKE-matching operators:
    \begin{eqnarray}
H_{BKE,1/2}(O)+H_{FBKE,1/2}(O)&&= n,\\
H_{\mathrm{CBC},\alpha}(O)+H_{\mathrm{BBC},\alpha}(O)&&=n,
\end{eqnarray}
We conclude that
    \begin{align}
     n - H_{BKE,1/2}(O)  &\leq H_{\mathrm{BBC},\alpha}(O)\\
     &= n -H_{\mathrm{CBC},\alpha}(O) \\
        &\leq H_{FBKE,1/2}(O)\\
        &= n - H_{BKE,1/2}(O),
    \end{align}
   which means the identity $H_{\mathrm{BBC},\alpha}(O) = n - H_{BKE,1/2}(O) = n-r$.
    Then noted that
\begin{align}
      H_{SE,\alpha}(U^\dagger OU)\leq  H_{\mathrm{BBC},\alpha}(U^\dagger OU) \leq H_{BBC,0}(U^\dagger O U)\leq N_T+H_{BBC,0}(O).
    \end{align}
 \begin{align}
        N_T &\geq H_{SE,\alpha}(U^\dagger OU)-H_{BBC,0}(O)\\
        &=H_{SE,\alpha}(U^\dagger OU)-(n - r)
    \end{align}
    Similarly, for the $H$ gates, we conclude that:
    \begin{align}
        H_{SE,\alpha}(U^\dagger OU)\leq  H_{\mathrm{CBC},\alpha}(U^\dagger OU) \leq H_{BBC,0}(U^\dagger O U)\leq 2N_H+H_{CBC,0}(O),
    \end{align}
    which leads to the result $N_{H} \geq (1/2)\cdot[H_{SE,\alpha}(U^\dagger OU)- r]$ based on the fact that $H_{\mathrm{BBC},\alpha}(O) =r$ for TE-matching operators.
\end{proof}
\color{black}
\section{$\mathcal{F}_{\mathrm{CBC}}$ vs $\mathcal{F}_{\mathrm{BBC}}$}
\subsection{Magic-coherence Duality\label{mcd}}
$\mathcal{F}_{\mathrm{CBC}}$ is composed of permutation and diagonal phase gates on the computational basis, and $\mathcal{F}_{\mathrm{BBC}}$ are Clifford circuit with also permutation and phase effects on the Pauli basis. However, $\mathcal{F}_{\mathrm{CBC}}$ is not merely Clifford circuits with the Pauli basis mapped to the computational basis. The reason is that $U$ can only transform $O$ through $UOU^\dag$, therefore, Clifford circuits acting on a Pauli operator can only introduce phases $\pm1$. In contrast, $U$ in $\mathcal{F}_{\mathrm{CBC}}$ can introduce complex phases for the computational basis. In this sense, $\mathcal{F}_{\mathrm{CBC}}$ is larger than $\mathcal{F}_{\mathrm{BBC}}$. In fact, we can show that the Clifford structure can be defined in $\mathcal{F}_{\mathrm{CBC}}$ and is a subset of $\mathcal{F}_{\mathrm{CBC}}$.
\begin{definition}[Generalized Clifford group on basis $\mathcal{B}$]
\label{def:GC_on_B}
    Given a set of phase-sensitive and complete matrix basis $\mathcal{B}=\{{B}_j\}_{j = 1}^{N} \cup \{0\}$ such that ${B}_j \cdot{B}_k \in \mathcal{B}, B_j^\dagger \in \mathcal{B}$ and the basis product map $  B_j \cdot B_k \to B_l$ can be calculated within $\text{poly}(n)$ bra-ket. 
    The unitary transformation such that  $U^\dagger B_iU = B_{\pi (i)}$ is call the generalized Clifford of $\mathcal{B}$, which we denoted as $\mathcal{C}(\mathcal{B})$.
  The unitary  $U \in \mathcal{C}(\mathcal{B})$ is a Pauli components in $\mathcal{C}(\mathcal{B})$ iff $U = U^\dagger$ and $U^\dagger B_iU = \alpha_i B_i$, which we denote $\mathcal{P}(\mathcal{B})$. 
\end{definition}
The generalized Pauli gates have the property that it can commute with generalized Clifford gates, at a cost of a generalized Pauli gate correction. This is proved by the lemma below.
\begin{lemma}[Peoperty of generalized Pauli operators]
\label{lemma:comm_Pauli}
Generalized Pauli operators satisfied the properties below:\\
(i). Take $P \in \mathcal{P}(\mathcal{B}), \text{Cl} \in \mathcal{C}(\mathcal{B})$, then $\text{Cl}^\dagger P\text{Cl} \in \mathcal{P}(\mathcal{B})$.\\
 (ii). Element $P\in \mathcal{P}(\mathcal{B})$ is either commute with all elements in $\mathcal{P}(\mathcal{B})$ , which we called \textit{decoherence like}, or on the other hand, anti-commute with at least half of the elements within $\mathcal{P}(\mathcal{B})$, which we called \textit{magic-like}.\\
 (iii). If $\mathcal{B}$  has finite elements, then element $\text{Cl}\in \mathcal{C}(\mathcal{B})$ can be finitely generated by some basic gate set $\mathcal{GC}(\mathcal{B})$.
\end{lemma}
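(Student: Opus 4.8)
The plan is to establish parts (i)--(iii) of Lemma~\ref{lemma:comm_Pauli} one at a time, mimicking the familiar arguments for the standard Pauli/Clifford groups but replacing ``conjugation by $U$'' with the basis permutation $\pi$ guaranteed by Definition~\ref{def:GC_on_B}.

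For part (i), I would start from $\mathrm{Cl}^\dagger B_i \mathrm{Cl} = B_{\pi(i)}$ and $P^\dagger B_i P = \alpha_i B_i$. Setting $P' := \mathrm{Cl}^\dagger P\, \mathrm{Cl}$, a direct computation gives $P'^\dagger B_i P' = \mathrm{Cl}^\dagger P^\dagger \mathrm{Cl}\, B_i\, \mathrm{Cl}^\dagger P\, \mathrm{Cl}$; inserting $\mathrm{Cl}\mathrm{Cl}^\dagger = I$ around $B_i$ and using $\mathrm{Cl} B_i \mathrm{Cl}^\dagger = B_{\pi^{-1}(i)}$ (so that $\mathrm{Cl}^\dagger B_{\pi^{-1}(i)}\mathrm{Cl} = B_i$) reduces this to $\mathrm{Cl}^\dagger (\alpha_{\pi^{-1}(i)} B_{\pi^{-1}(i)}) \mathrm{Cl} = \alpha_{\pi^{-1}(i)} B_i$. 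Hence $P'$ is again diagonal in the $\mathcal{B}$-basis with phases $\alpha'_i = \alpha_{\pi^{-1}(i)}$, and $P' = P'^\dagger$ follows from $P = P^\dagger$ and unitarity of $\mathrm{Cl}$; so $P' \in \mathcal{P}(\mathcal{B})$.

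For part (ii), I would use that each $P \in \mathcal{P}(\mathcal{B})$ acts on the basis by $P^\dagger B_i P = \alpha_i B_i$ with $\alpha_i \in \{\pm 1\}$ (forced by $P = P^\dagger$ and $P^2 = I$). Two such operators $P, P'$ commute or anticommute basis-element-wise, and whether $PP' = \pm P'P$ is governed by a $\mathbb{Z}_2$-bilinear form on the index set, exactly as the symplectic form controls commutation of ordinary Paulis. The dichotomy then follows from the standard fact that a non-trivial $\mathbb{Z}_2$-linear functional on a vector space vanishes on a subspace of index two: either $P$ commutes with every generator of $\mathcal{P}(\mathcal{B})$ (``decoherence-like''), or the set of $P'$ that anticommute with it is a coset of an index-two subgroup, hence contains at least half of $\mathcal{P}(\mathcal{B})$ (``magic-like''). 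The main subtlety here is checking that the product structure of $\mathcal{B}$ is rich enough that $\mathcal{P}(\mathcal{B})$ is closed under multiplication and that the phases $\alpha_i$ genuinely furnish a homomorphism to $\mathbb{Z}_2$; this uses the hypothesis $B_j B_k \in \mathcal{B}$ and $B_j^\dagger \in \mathcal{B}$.

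For part (iii), when $\mathcal{B}$ is finite, $\mathcal{C}(\mathcal{B})$ sits inside the normalizer of the finite group generated by $\mathcal{B}$ (up to phases), which is itself finite modulo the center; a finite group is finitely generated, so one may take $\mathcal{GC}(\mathcal{B})$ to be any generating set, e.g.\ analogues of $\{H, S, CX\}$ adapted to $\mathcal{B}$. I expect part (ii) to be the main obstacle: the clean $\{\pm 1\}$-phase picture and the induced symplectic form need the basis axioms of Definition~\ref{def:GC_on_B} to be used carefully, and one must rule out intermediate cases (e.g.\ $P$ anticommuting with a positive but sub-half fraction of $\mathcal{P}(\mathcal{B})$), which is precisely what the index-two argument delivers.
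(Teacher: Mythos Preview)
Your treatment of (i) is correct and essentially identical to the paper's computation. For (iii) the paper is even more direct than your normalizer route: since conjugation by $\mathrm{Cl}$ permutes the finite set $\mathcal{B}$, the assignment $\mathrm{Cl}\mapsto\pi$ is a homomorphism into the symmetric group $S_{|\mathcal{B}|}$ whose kernel consists of scalars (by completeness of $\mathcal{B}$), so $\mathcal{C}(\mathcal{B})$ is finite modulo phases and hence finitely generated. Either argument is fine.

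The genuine gap is in part (ii). You write that ``two such operators $P,P'$ commute or anticommute basis-element-wise, and whether $PP'=\pm P'P$ is governed by a $\mathbb{Z}_2$-bilinear form on the index set,'' but this commute/anticommute dichotomy is exactly the nontrivial step, and it does \emph{not} follow from the diagonal action $P^\dagger B_i P=\alpha_i B_i$ with $\alpha_i\in\{\pm1\}$ alone. What that action gives you is only that $PP'$ and $P'P$ induce the same conjugation action on every $B_i$; to upgrade this to $PP'=\pm P'P$ as operators you must invoke the \emph{completeness} of $\mathcal{B}$: the commutator $R=P'P\,P'^{\dagger}P^{\dagger}$ then commutes with every matrix, hence $R=\alpha I$ by Schur, after which Hermiticity of $P,P'$ (so $P^2=P'^2=I$) forces $\alpha^2=1$. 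The paper makes precisely this step explicit. Your ``main subtlety'' paragraph points instead to closure under multiplication and the hypotheses $B_jB_k\in\mathcal{B}$, $B_j^\dagger\in\mathcal{B}$, which are not the missing ingredient here.

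Once $PP'=\pm P'P$ is in hand, your index-two argument is equivalent to the paper's coset argument: pick any $Q$ anticommuting with $P$, and left multiplication by $Q$ bijects the centralizer $C(P)$ onto the set of elements anticommuting with $P$, which therefore contains at least half of $\mathcal{P}(\mathcal{B})$.
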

\begin{proof}
    It is trivial to check that $\text{Cl}^\dagger P \text{Cl}$ is Hermitian.
    By definition,   $\text{Cl}^\dagger B_i\text{Cl} = \alpha_i B_{\pi (i)}$, which means that $\text{Cl}B_{i}\text{Cl}^{\dagger} = \alpha_{\pi^{-1}(i)}^{-1}B_{\pi^{-1}(i)}$. Then it can be shown that
    \begin{eqnarray}
        \text{Cl}^\dagger P^\dagger\text{Cl} B_j \text{Cl}^\dagger P\text{Cl} &&=  \text{Cl}^\dagger P^\dagger \alpha_{\pi^{-1}(j)}^{-1} B_{\pi^{-1}(j)}^\dagger P\text{Cl} \\
        &&=\text{Cl}^\dagger \beta_{\pi^{-1}(j)} \alpha_{\pi^{-1}(j)}^{-1} B_{\pi^{-1}(j)}^\dagger\text{Cl} \\
        &&= \beta_{\pi^{-1}(j)} \alpha_{\pi^{-1}(j)}^{-1}\alpha_{\pi^{-1}(j)} \gamma_{\pi^{-1}(j)} B_{j}^\dagger.
    \end{eqnarray}
       This proves the property (i). By definition, for $U_1,U_2 \in\mathcal{P}(\mathcal{B})$,
    \begin{eqnarray}
         &&(U_1U_2)^\dagger B_i(U_1U_2) = \alpha_i^{(1)}\alpha_i^{(2)}B_i =(U_2U_1)^\dagger B_i(U_2U_1),\\
         &&(U_2U_1)(U_1U_2)^\dagger B_i(U_1U_2)(U_2U_1)^\dagger= B_i.
    \end{eqnarray}
    By defining  $R=U_2U_1U_2^\dagger U_1^\dagger$, we noted that $[R,B_i] = 0$.
    Because $B_i \in \mathcal{B}$ forms a complete basis of matrix, this means that $[R ,M] = 0$ for arbitrary matrix $M \in \mathbb{C}^n \times \mathbb{C}^n$ .
    This concludes that $R = \alpha \mathbb{I}, \alpha \in \mathbb{C}$.
   Because $U_1,U_2$ as hermitian by definition, we conclude that $U_2U_1 = \pm U_1 U_2$. 
   This shows that $U_1,U_2 \in \mathcal{P}(\mathcal{B})$ are either commute or anti-commute.
    Given arbitrary element $P \in \mathcal{P}(\mathcal{B})$, find the center $C(P)$ of $P$, which is the maximum set of operators that that commute with $P$.
    If $C(P) = \mathcal{P}(\mathcal{B})$,  then $P$ commute with all the other elements
     If $C(P) \neq \mathcal{P}(\mathcal{B})$,  then there must exist $Q \in \mathcal{P}(\mathcal{B})$ such that $PQ = -QP$. 
     Then elements in $Q\cdot C(P):= \{Q\cdot e| e\in C(P)\}$ all anti-commute with $P$. 
     Finally, we note that $|C(P)| = |Q\cdot C(P)|$ and $C(P) \cup Q\cdot C(P) = \mathcal{P}(\mathcal{B})$ because  $C(P)$ is maximum. 
     This proves the second statement.
Noted that $\mathcal{B}$ has finite elements, thus $ \mathcal{C}(\mathcal{B})$ is a subgroup of $\pi(|\mathcal{B}|)$, being another finite group.
Then Element $\text{Cl}\in \mathcal{C}(\mathcal{B})$ can be finitely generated by some basic gate set $\mathcal{GC}(\mathcal{B})$.
\end{proof}
A well known example is the Pauli basis $\mathcal{B}_{\text{Pauli}}:=\{\pm i,\pm1\} \times \{I,Z,X,Y\}^{\otimes n} \cup \{0\}$ and its Clifford gates $\mathcal{C}(\mathcal{B}_{\text{Pauli}}) $ generated by gates  $\{CX,S,H\} $. 
Other interesting examples are (i).unitary equivalent Pauli basis $\mathcal{B}_{\text{Pauli}}(R):= \{R PR^\dagger | P \in \mathcal{B}_{\text{Pauli}}\}$, (ii).stabilizer basis  $\mathcal{B}_{\text{STAB}}:=\{2^{-p/2}\omega^m|\omega = e^{i\pi/4},p \in \mathbb{Z}_{\geq 0},m \in \mathbb{Z}_8\} \times  \{\ket{\psi}\!\bra{\phi} | \ket{\psi},\ket{\phi} \in \text{STAB}_n\}$, (iii).mutually unbiased basis $\mathcal{B}_{\text{MUB}}:=\{z \in \mathbb{C}||z| = d^{-p/2}, p  = \mathbb{Z}_{\geq 0}\} \times  \{\ket{\psi}\!\bra{\phi} | \ket{\psi},\ket{\phi} \in \text{MUB}(d)\}$, and (iv).Pauli-computational mixed basis $\mathcal{B}_{\text{MIX}}:=\{\pm i,\pm1\} \times \{I,Z,X,Y\}^{\otimes r} \otimes \{\ket{\mathbf j}\!\bra{\mathbf k}\}_{\mathbf j,\mathbf k \in \{0,1\}^{ n -r}}\cup \{0\}$.

The choice related to coherence, which we will further discussed in this section, is the phase-restrained computational basis 
$\mathcal{B}_{\text{comp}} =  \left\{ \{\pm i, \pm 1\} \times\ket{\mathbf j}\!\bra{\mathbf k}\right\}_{\mathbf j,\mathbf k \in \{0,1\}^{ n}}\cup \{0\}$ with $4 \times 4^n + 1$ elements.
For $\mathcal{B} = \mathcal{B}_{\text{comp}}$, The set $\mathcal{C}(\mathcal{B})$ and $\mathcal{P}(\mathcal{B} )$ forms a group and actually has the form
 \begin{align}
    &\mathcal{C}(\mathcal{B})\!=\!\left\{\!\!\sum_{j \in \{0,1\}^n}\!\!\!\!\!e^{i\frac{\pi}{2}\phi(j) } \!\ket{\pi(j)}\!\bra{j} ;\phi(j)\!\!:\!\!\{0,1\}^n \to \mathbb{Z}_4, \!\pi \!\in\! S_n\!\right\} ,\label{eq:comp_C}\\
    &\mathcal{P}(\mathcal{B})\! =\! \left\{\!\!\sum_{j \in \{0,1\}^n}e^{i\frac{\pi}{2}\phi(j) } \ket{j}\!\bra{j} ;\phi(j): \{0,1\}^n \to \{0,2\}\right\}.\label{eq:comp_P}
\end{align}
It is also simple to see that both $\mathcal{C}(\mathcal{B}_{\text{comp}})$ and $\mathcal{P}(\mathcal{B}_{\text{comp}})$ forms group and inherits many properties in ordinary Clifford gates, as proved below.
\begin{lemma}[Property of  $\mathcal{B}_{\text{comp}}$]
\label{lemma:Generalized_Cliff_comp}
    The group $\mathcal{C}(\mathcal{B}_{\text{comp}} )$ has the form Eq.~\eqref{eq:comp_C} and can be finitely generated by gate set $\mathcal{GC}(\mathcal{B}_{\text{comp}})=\{CX,S,CCX\}$.
            The group $\mathcal{P}(\mathcal{B}_{\text{comp}} )$ has the form Eq.~\eqref{eq:comp_P} and composed of real-valued diagonal gates that encodes arbitrary Boolean function $f: \{0,1\}^n \to \{+1,-1\}$.
\end{lemma}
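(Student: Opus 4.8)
The plan is to prove the two assertions in turn: first the set descriptions of $\mathcal{C}(\mathcal{B}_{\text{comp}})$ and $\mathcal{P}(\mathcal{B}_{\text{comp}})$ in Eqs.~\eqref{eq:comp_C}--\eqref{eq:comp_P}, then the generation claim. For the set descriptions the inclusion ``$\supseteq$'' is the easy half: I would compute that $V=\sum_{\mathbf j} i^{\phi(\mathbf j)}|\pi(\mathbf j)\rangle\langle\mathbf j|$ conjugates $c\,|\mathbf j\rangle\langle\mathbf k|$ (with $c\in\{\pm1,\pm i\}$) to $c\, i^{\phi(\mathbf j)-\phi(\mathbf k)}|\pi(\mathbf j)\rangle\langle\pi(\mathbf k)|\in\mathcal{B}_{\text{comp}}$, and sends $0$ to $0$; since $\mathcal{B}_{\text{comp}}$ is closed under products and adjoints (Definition~\ref{def:GC_on_B}), every such $V$ lies in $\mathcal{C}(\mathcal{B}_{\text{comp}})$, and the subclass with $\pi=\mathrm{id}$ and $\phi$ valued in $\{0,2\}$ (so $V=V^\dagger$) gives general elements of $\mathcal{P}(\mathcal{B}_{\text{comp}})$.

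The content is the reverse inclusion. Given $U\in\mathcal{C}(\mathcal{B}_{\text{comp}})$, I would feed the rank-one projectors $|\mathbf j\rangle\langle\mathbf j|$ through the conjugation (which, like that by $U^\dagger$, permutes $\mathcal{B}_{\text{comp}}$): $U\,|\mathbf j\rangle\langle\mathbf j|\,U^\dagger$ is Hermitian, positive semidefinite and rank one, and the only such elements of $\mathcal{B}_{\text{comp}}$ are the bare projectors $|\mathbf k\rangle\langle\mathbf k|$ (a scalar $\pm i$ or $-1$ kills positivity). Hence $U\,|\mathbf j\rangle\langle\mathbf j|\,U^\dagger=|\pi(\mathbf j)\rangle\langle\pi(\mathbf j)|$ for a map $\pi$ of $\{0,1\}^n$, which is a bijection because $U$ is unitary, so $U|\mathbf j\rangle=\mu_{\mathbf j}|\pi(\mathbf j)\rangle$ with $|\mu_{\mathbf j}|=1$, i.e.\ $U$ is monomial. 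Then demanding $U\,(i\,|\mathbf j\rangle\langle\mathbf k|)\,U^\dagger=i\,\mu_{\mathbf j}\overline{\mu_{\mathbf k}}\,|\pi(\mathbf j)\rangle\langle\pi(\mathbf k)|\in\mathcal{B}_{\text{comp}}$ forces $\mu_{\mathbf j}\overline{\mu_{\mathbf k}}\in\{\pm1,\pm i\}$ for all $\mathbf j,\mathbf k$, so all $\mu_{\mathbf j}$ share one coset of $\langle i\rangle$; dropping the irrelevant global phase gives $\mu_{\mathbf j}=i^{\phi(\mathbf j)}$, which is Eq.~\eqref{eq:comp_C}. For $\mathcal{P}(\mathcal{B}_{\text{comp}})$ I would additionally impose the eigen-condition $U^\dagger B_i U=\alpha_i B_i$ with $B_i=|\mathbf j\rangle\langle\mathbf j|$: combined with the monomial form this forces $\pi=\mathrm{id}$ (so $U$ diagonal), and $U=U^\dagger$ then forces real diagonal entries, i.e.\ $U=\mathrm{diag}((-1)^{f(\mathbf j)})$ with $f$ an arbitrary Boolean function, which is Eq.~\eqref{eq:comp_P}.

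For the generation claim I would factor a general element as $V=D_\phi\,W_\pi$, with $W_\pi=\sum_{\mathbf j}|\pi(\mathbf j)\rangle\langle\mathbf j|$ a classical reversible permutation and $D_\phi=\sum_{\mathbf j} i^{\phi(\mathbf j)}|\mathbf j\rangle\langle\mathbf j|$ a diagonal $\mathbb{Z}_4$-phase gate, and attack the two factors by circuit synthesis: $W_\pi$ from $CX,CCX$ by reversible-circuit synthesis, and $D_\phi$ by writing $\phi$ in its $\mathbb{Z}_4$-valued algebraic normal form $\phi(\mathbf j)=\sum_{T\subseteq[n]}c_T\prod_{\ell\in T}j_\ell\pmod 4$ (which exists uniquely since the subset-lattice zeta matrix is unitriangular over $\mathbb{Z}$, hence invertible mod $4$) and realizing each monomial $i^{c_T\prod_{\ell\in T}j_\ell}$ as a multi-controlled phase: compute the AND $\prod_{\ell\in T}j_\ell$ into a scratch wire with $CX/CCX$, apply $S^{c_T}$, uncompute. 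Composing these pieces and using closure under products gives all of $\mathcal{C}(\mathcal{B}_{\text{comp}})$. I expect the main obstacle to be precisely this last factor: an honest ancilla-free implementation of the top-degree multi-controlled-$S$ phases from $\{S,CX,CCX\}$ is delicate (already on two qubits $CS\notin\langle CX,S\rangle$, being non-Clifford), so the statement is most naturally read with at least a borrowed/dirty ancilla available, and one should also settle whether $\{CX,CCX\}$ alone — which fix the all-zeros string — reach every bitstring permutation or whether a single-qubit $X$ (equivalently one extra transposition) is needed. By contrast the structural half should drop out cleanly once one notices that conjugation must carry rank-one projectors to rank-one projectors.
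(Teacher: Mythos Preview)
Your structural characterisation of $\mathcal{C}(\mathcal{B}_{\text{comp}})$ and $\mathcal{P}(\mathcal{B}_{\text{comp}})$ is correct and takes a genuinely different route from the paper. The paper argues via the multiplicative closure of the basis: it writes $U|\mathbf j\rangle\langle\mathbf k|U^\dagger=\alpha_{\mathbf j\mathbf k}\,|\pi_L(\mathbf j,\mathbf k)\rangle\langle\pi_R(\mathbf j,\mathbf k)|$, multiplies two such identities together using $(|\mathbf j\rangle\langle\mathbf k|)(|\mathbf k\rangle\langle\mathbf j'|)=|\mathbf j\rangle\langle\mathbf j'|$, and reads off consistency relations that force $\pi_L=\pi_R=:\pi$ (independent of the second index) and $\alpha_{\mathbf j\mathbf k}=i^{\phi(\mathbf j)-\phi(\mathbf k)}$. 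Your route through rank-one projectors and positivity is shorter and more conceptual: once $U|\mathbf j\rangle\langle\mathbf j|U^\dagger=|\pi(\mathbf j)\rangle\langle\pi(\mathbf j)|$, the monomial form of $U$ is immediate, and one further conjugation pins down the $\mathbb{Z}_4$ phases. The paper's approach has the mild advantage of using only the algebra of the basis (no Hermiticity or positivity), so it would port to other multiplicatively closed bases in Definition~\ref{def:GC_on_B}; yours exploits the specific structure of $\mathcal{B}_{\text{comp}}$ and buys brevity.

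For the generation claim your strategy---factor as diagonal times permutation, synthesise the permutation by reversible-circuit methods, and realise the diagonal via the $\mathbb{Z}_4$ algebraic normal form with compute--apply $S^{c_T}$--uncompute---is exactly the paper's. The caveats you raise are real and the paper does not resolve them either: its gadget $M(t)$ stores $\prod_{r\le t}j_r$ into a wire and then appeals to the universality of $\{CCX,X\}$ for reversible Boolean circuits, which tacitly uses both an ancilla and the $X$ gate (the paper's proof in fact writes the gate set as $\{CCX,S,X\}$ at one point). Your observation that $CX,CCX$ fix $|0\cdots0\rangle$, and that $CS\notin\langle CX,S\rangle$ on two bare qubits, are correct; so the generation statement is most naturally read with at least one borrowed ancilla available, and on that reading your argument and the paper's coincide.
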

\begin{proof}
    By definition, for $U \in \mathcal{C}(\mathcal{B}_{\text{comp}}) $, $U \ket{j}\!\bra{k}U^\dagger = \alpha_{jk} \ket{\pi_L(j,k)}\!\bra{\pi_R(j,k)}$, where $\pi_L,\pi_R $ are permutation of length-$2n$ binary strings. The phase shift satisfied $\alpha_{ij} =e^{i\frac{\pi}{2}\phi(i,j) }$ with $\phi(:,:)$ a map $\{0,1\}^n \times \{0,1\}^n \to \mathbb{Z}_4$ because the constant before the basis are in the set $\{\pm 1,\pm i\}$.
    Similarly, we see that $U \ket{k}\!\bra{j'}U^\dagger = \alpha_{kj'} \ket{\pi_L(k,j')}\!\bra{\pi_R(k,j')}$ . 
    Taking $j = j'$, then conjugation gives $\alpha_{jk} = \alpha_{kj}^*$.
    Multiplying these two equalities will gives
    \begin{eqnarray}
        U \ket{j}\!\bra{j'}U^\dagger&&= \alpha_{jk}\alpha_{kj'} \ket{\pi_L(j,k)}\!\bra{\pi_R(j,k)}{\pi_L(k,j')}\rangle\!\bra{\pi_R(k,j')}\\
        &&= \alpha_{jk}\alpha_{kj'} \delta_{{\pi_R(j,k)},{\pi_L(k,j')}} \ket{\pi_L(j,k)}\!\bra{\pi_R(k,j')}\\
        &&=\alpha_{jj'} \ket{\pi_L(j,j')}\!\bra{\pi_R(j,j')}.
    \end{eqnarray}
  Because $U$ is unitary, we conclude that $\alpha_{jj} \neq 0$. Then for arbitrary $k,j \in \{0,1\}^n$ , we have
    \begin{eqnarray}
\label{eq:perm_ident}
\alpha_{jk}\alpha_{kj} &= \alpha_{jj} = 1,\pi_R(j,k) &= \pi_L(k,j) ,\\
            \pi_L(j,j) &= \pi_L(j,k),  \pi_R(k,j) &= \pi_R(j,j).
    \end{eqnarray}
 It is thus convenient to denote $\pi_L(j):=\pi_L(j,k) \equiv \pi_L(j,j)$ and $\pi_R(j) := \pi_R(k,j)\equiv \pi_R(j,j)$ .
 Because the second line in Eq.~\eqref{eq:perm_ident}, we conclude $\pi_L(k) = \pi_R(k):=\pi(k)$. 
 Then we conclude
 \begin{eqnarray}
     U \ket{j}\!\bra{k}U^\dagger U \ket{k}\!\bra{j'}U^\dagger &= \alpha_{jk} \alpha_{kj'} \ket{\pi(j)}\!\bra{\pi(j')} =\alpha_{jj'} \ket{\pi(j)}\!\bra{\pi(j')},
 \end{eqnarray}
 which means that $\phi(j,k) - \phi(j',k) = \phi(j,j'), \forall j,j',k \in \{0,1\}^n$, fixing the phase function as the form $\phi(j,k) := \phi(j) - \phi(k)$.
 Based on analysis above, valid unitary operations must have form
 \begin{eqnarray}
     U \ket{j}\!\bra{k}U^\dagger &&= \alpha_{jk} \ket{\pi(j)}\!\bra{\pi(k)}\\
    && =\left(\sum_{j' \in \{0,1\}^n}e^{i\frac{\pi}{2}\phi(j') } \ket{\pi(j')}\!\bra{j'} \right) \ket{j}\!\bra{k} \left(\sum_{k' \in \{0,1\}^n}e^{-i\frac{\pi}{2}\phi(k') } \ket{k'}\bra{\pi(k')}\right)
 \end{eqnarray}
 This prove our first claim.

 To show that  $\{CCX,S,X\}$ forms a complete generator set of this group, it is trivial to see that $CX_{12}CX_{21}CX_{12}= SWAP_{12}$ forms permutation between any two bits, which can generate arbitrary permutation in $S_n$ group.
 Thus the problem reduced to construct phase operation $D\ket{j} \to e^{i \frac{\pi}{2} \phi(j)} \ket{j}$ for $D \in \mathcal{GC}(\mathcal{B}_{\text{comp}})$.
 Noted that every function $\phi:\{0,1\}^n \to \mathbb{Z}_4$ can be uniquely expressed as a multilinear polynomial
 \begin{eqnarray}
 \phi(j) = \sum_{S \subseteq[n]} a_Sj^S    ,\text{where~}a_S \in \mathbb{Z}_4, j^S:= \prod_{t \in S}j_t.
 \end{eqnarray}
 
Combined with arbitrary $SWAP$ gate, it is sufficient to construct gate $D(a,t)\ket{j} \to e^{i \frac{a\pi}{2}j_1\cdots j_t} \ket{j}$ for arbitrary $a \in \mathbb{Z}_4$ and $t \in [n]$. Notice that If we can construct the operation 
\begin{eqnarray}
    M(t) \ket{j_1,\cdots,j_n} = M(t)\left| j_1,\cdots,j_{t-1},\prod_{r =1}^tj_r,j_{t+1},\cdots,j_n\right\rangle,
\end{eqnarray}
We can use de-computation to construct gate $D(a,t)$ as:
\begin{eqnarray}
 M(t)^{-1}  S_{j_t}^{a}  M(t) \ket{j_1,\cdots,j_n}  = e^{i \frac{a\pi}{2}j_1\cdots j_t} \ket{j_1,\cdots,j_n}.
\end{eqnarray}
Using the fact that $\{CCX,X\}$ constitute a universal reversible gate set in classical Boolean circuit~\cite{fredkin1982conservative}, we conclude the existence of $M(t)$.
    The next statement is explicit from Eq.~\eqref{eq:comp_C}, with the hermitian property.
\end{proof}

\subsection{$CCX$ vs $T$\label{whyccx}}
\begin{figure}
\centering
\includegraphics[width=0.7\linewidth]{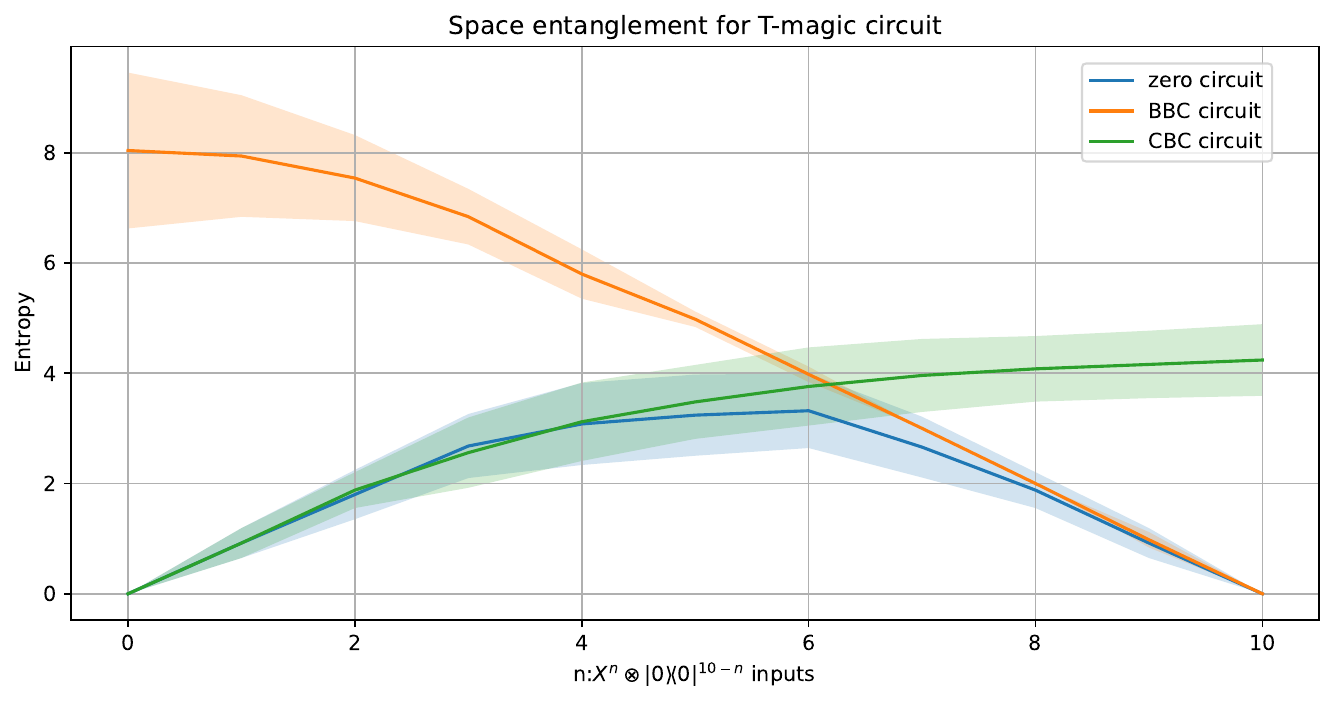}
\caption{The same setting as Fig \ref{fig3} except for $U$ in $\mathcal{F}_{\mathrm{CBC}}$, we construct them randomly from $\{T,S,CX\}$ instead of $\{CCX,T,S,CX\}$.}
\label{fig:T_and_Toffoli}
\end{figure}
In Fig. \ref{fig:T_and_Toffoli}, we numerically find that when $U$ in $\mathcal{F}_{\mathrm{CBC}}$ is made up of $\{T,S,CX\}$, SE cannot approach the upper bound. The simulations were performed using mindquantum package \cite{xu2024mindspore,ChanceSiyuan_repo}.
The reason is that $\{T,S,CX\}$ can only cover a subset of $\mathcal{F}_{\mathrm{CBC}}$. For example, $\{T,S,CX\}$ cannot construct $CCX$ \cite{nielsen2010quantum}. Therefore, we also introduce $CCX$ into the gate set and find that $CCX$ alone without $T$ can already approach the upper bound well as shown in Fig. \ref{fig3}. The reason can be understood from Section \ref{mcd}, where we show $\{CCX,CX,S\}$ forms the Clifford structure on computational basis, making a pretty good exploration on $\mathcal{F}_{\mathrm{CBC}}$ \cite{webb2015clifford}.

\section{Application of BKE-matching operators\label{app:application}}
We identify three physically motivated settings where 
BKE-matching operators with intermediate BKE arise 
naturally, demonstrating that the regime beyond zero 
BKE is essential for modern quantum simulation tasks.

\paragraph{Marginal probability distributions.}

Weak simulation of quantum circuits can be reduced to 
computing marginal probability 
distributions~\cite{bravyi2022simulate}. For an 
$n=(n_1+n_2)$-qubit system, the marginal over $n_2$ 
qubits reads
\begin{align}
p(\mathbf{s}) = \mathrm{tr}\!\left[U |0\rangle\!\langle 0|^{\otimes n} U^\dagger 
\cdot \left(\mathbb{I}_{n_1} \otimes |\mathbf{s}\rangle\!\langle\mathbf{s}|\right)\right].
\end{align}
The observable 
$O = \mathbb{I}_{n_1}\otimes|\mathbf{s}\rangle\!\langle\mathbf{s}|$ 
is a tensor product of single-qubit identity and 
computational-basis operators, hence a BKE-matching 
operator with 
$H_{\mathrm{BKE},1/2}(O) = n_1$ and 
$H_{\mathrm{FBKE},1/2}(O) = n_2$, giving 
$H_{\mathrm{CBC},\alpha}(O) = n_2$ and 
$H_{\mathrm{BBC},\alpha}(O) = n_1$.
Applying Lemma~\ref{ll7}, if a Clifford+$T$ circuit $U$ 
achieves $H_{\mathrm{SE},\alpha}(U^\dagger OU)\geq\eta$, 
then the gate counts satisfy
\begin{align}
N_T \geq \eta - n_2, \qquad N_H \geq \eta - n_1.
\end{align}
For the extremal cases, $n_1 = 0$ (amplitude estimation) 
recovers the purely coherence-limited regime with cost 
scaling as $2^{N_H}$, while $n_1 = n$ (full distribution) 
recovers the magic-limited regime with cost $2^{N_T}$. 
At intermediate values $n_1 \approx n_2 \approx n/2$, both 
bounds impose comparable constraints, confirming that 
neither computational-basis nor Pauli-basis methods alone 
are efficient. 

\paragraph{Mixed-state inputs. }
Experimental quantum devices inevitably produce mixed 
initial states due to imperfect preparation. Consider 
$n_1$ depolarized qubits subject to a Pauli error $P$ 
and $n_2$ well-initialized qubits:
\begin{align}
\rho_0 = \frac{\mathbb{I}_{n_1}+\varepsilon P}{2^{n_1}}
\otimes |0\rangle\!\langle 0|^{\otimes n_2}
= 2^{-n_1} O_1 + 2^{-n_1}\varepsilon\, O_2,
\end{align}
where $O_1 = \mathbb{I}_{n_1}\otimes|0\rangle\!\langle 0|^{\otimes n_2}$ 
and $O_2 = P\otimes|0\rangle\!\langle 0|^{\otimes n_2}$. 
Both $O_1$ and $O_2$ are BKE-matching operators, being 
tensor products of single-qubit Pauli and 
computational-basis operators. This decomposition is 
exact and requires only two terms, whereas a 
computational-basis decomposition of $\mathbb{I}_{n_1}$ 
would require $2^{n_1}$ terms and a full Pauli 
decomposition of $|0\rangle\!\langle 0|^{\otimes n_2}$ 
would require $2^{n_2}$ terms. The output probability 
$p(\mathbf{x}) = \mathrm{tr}(U\rho_0 U^\dagger \cdot 
|\mathbf{x}\rangle\!\langle\mathbf{x}|)$ thus reduces to 
two BKE-matching calculations, each analyzable via 
Lemma~\ref{ll7} with the BKE determining the 
coherence-versus-magic resource balance.

\paragraph{Quantum error correction protocols.}
Syndrome extraction in fault-tolerant quantum computing 
combines both of the above settings. Data qubits 
($n_1$ qubits, subject to logical Pauli errors) and 
ancilla qubits ($n_2$ qubits, initialized to 
$|0\rangle\!\langle 0|^{\otimes n_2}$) undergo a syndrome 
extraction circuit $U$. The syndrome probability is
\begin{align}
p(\mathbf{x},\mathbf{s}) = \mathrm{tr}\!\left[U\rho_0 U^\dagger 
\cdot \left(\mathbb{I}_{n_1}\otimes|\mathbf{s}\rangle\!\langle\mathbf{s}|\right)\right],
\end{align}
where $\rho_0 = 2^{-n_1}O_1 + 2^{-n_1}\varepsilon\,O_2$ 
as before. This calculation involves the contraction of 
BKE-matching operators from both the input 
($O_1$, $O_2$) and the output measurement 
($\mathbb{I}_{n_1}\otimes|\mathbf{s}\rangle\!\langle\mathbf{s}|$), 
all residing in the intermediate BKE regime. 
The simulation cost for each term is quantified by 
Lemma~\ref{ll7}, with the BKE values of input and output 
operators jointly determining whether coherence or magic 
resources dominate.
\color{black}
\end{appendix}
\end{document}